\title{Generalizations of $2$-dimensional diagonal quantum channels with
constant Frobenius norm}
\author{Ivan Sergeev \\ ETH Zürich, Rämistrasse 101, 8092 Zürich
                     \\ e-mail: i.i.sergeyev@gmail.com}
\theoremstyle{definition}
\newtheorem{defn}{Definition}
\theoremstyle{plain}
\newtheorem{lem}{Lemma}
\newtheorem{thm}{Theorem}
\theoremstyle{remark}
\newtheorem{rmk}{Remark}
\begin{document}

\maketitle

\begin{abstract}
We introduce the set of quantum channels with constant Frobenius norm, the set
of diagonal channels and the notion of equivalence of one-parameter families of
channels. First, we show that all diagonal $2$-dimensional channels with
constant Frobenius norm are equivalent. Next, we generalize four one-parameter
families of $2$-dimensional diagonal channels with constant Frobenius norm to an
arbitrary dimension $n$. Finally, we prove that the generalizations are not
equivalent in any dimension $n \ge 3$.
\end{abstract}

\noindent {\bf Keywords:} quantum channels; Frobenius norm.

\section{Introduction}

Quantum information channels were first introduced by A. S. Holevo in 1972
\cite{Holevo1972}. The definition utilizes the notion of a completely positive
map, introduced by W. F. Stinespring in 1955 \cite{Stinespring1955} and studied
by M.-D. Choi \cite{Choi1975}, K. Kraus \cite{Kraus1983} and others. Although
quantum channels have been studied since the 1970s, the description of the set
of all quantum channels in an arbitrary dimension has not yet been obtained.
The advances toward solving this problem include, for example, a complete
description in dimension $2$ \cite{Ruskai2002} and some generalizations in
higher dimensions \cite{Nathanson2007}.

Many works study specific families of channels and their properties in various
contexts, for example, the depolarizing channel \cite{Amosov2006, Amosov2007pra,
Amosov2009, King2003, Leung2017}, the transpose-depolarizing channel
\cite{Datta2006} (special cases can be found in \cite{Alicki2004, Chen2013,
Horodecki1997, Landau1993, Werner2002}), the sets of ergodic and mixing channels
\cite{Burgarth2013}, Weyl channels \cite{Amosov2009, Amosov2017} etc. In
addition, various functions of the output of quantum channels are often
considered, such as the entropy \cite{Alicki2004, Amosov2006, Amosov2007pra,
Amosov2007, Amosov2009, Amosov2017} and the noncommutative $\ell_{p}$-norms
\cite{Amosov2000, Amosov2001}, especially in relation to the capacity of quantum
channels \cite{Amosov2000, Amosov2007, Holevo1998, King2001, King2003}.

Since it is difficult to describe the set of all quantum channels, we limit the
scope of our work to two narrow sets of channels: the set of quantum channels
with constant output Frobenius norm and the set of diagonal quantum channels.
More specifically, we prove that every $2$-dimensional channel with constant
Frobenius norm is either a completely depolarizing channel or is equivalent to
the depolarizing channel (i.e. differs from it only by a change of basis in the
input and output state spaces). Next, we introduce the notion of diagonal
quantum channels and use it to generalize the four one-parameter families of
channels that emerge from the $2$-dimensional classification. In addition, we
prove that the four generalizations not equivalent in any dimension $n \ge 3$.
In the future, we hope to generalize the results to broader sets of channels.
Although there are papers that deal with functions of the output of quantum
channels (such as those mentioned above), the author is not aware of other works
that consider either the sets of channels with constant output Frobenius norm or
diagonal quantum channels as defined in this paper.

This paper is organized as follows. First, we provide basic definitions in
Section~\ref{sec:basics}. Next, in Section~\ref{sec:dim 2}, we introduce the
notion of equivalence of channels, following \cite{King2001}, and classify all
$2$-dimensional quantum channels with constant Frobenius norm up to equivalence.
Then we move on to the $n$-dimensional case. Section~\ref{sec:diag} is dedicated
to the set of diagonal quantum channels: we define diagonal channels in
Section~\ref{ssec:diag def} and then prove the criterion for them to be channels
with constant Frobenius norm in Section~\ref{ssec:diag f}. In
Section~\ref{ssec:gen def}, we use the notion of diagonal channels to generalize
the four families of channels with constant Frobenius norm that emerge from the
$2$-dimensional case. Then, in Section~\ref{ssec:gen ineq}, the generalizations
are shown to be inequivalent in any dimension $n \ge 3$ (in contrast to the
$2$-dimensional case, where all four channels are equivalent). Finally,
Section~\ref{sec:conclusion} contains discussion of the results and states open
questions for further investigation.

\section{\label{sec:basics}Basic definitions}

In this section, we recall the definitions of quantum channels (following
\cite{Amosov2017, Holevo1980, Holevo2002, Holevo2010}) and of the Frobenius norm
and introduce the set of channels with constant function of the output for pure
input states.

We begin by recalling the notions of adjoint operators and completely positive
maps, which are essential in the definition of quantum channels. Denote the
underlying Hilbert spaces for the input and the output of the channel as
$\mathcal{H}_{1}$ and $\mathcal{H}_{2}$ respectively, the corresponding spaces
of quantum states as
$\mathcal{S}\left(\mathcal{H}_{1}\right)$ and
$\mathcal{S}\left(\mathcal{H}_{2}\right)$,
and the corresponding spaces of observables as
$\mathcal{B}\left(\mathcal{H}_{1}\right)$ and
$\mathcal{B}\left(\mathcal{H}_{2}\right)$.

\begin{defn}
Let $\Phi \colon \mathcal{S}\left(\mathcal{H}_{1}\right) \to
\mathcal{S}\left(\mathcal{H}_{2}\right)$. The adjoint operator
$\Phi^{*} \colon \mathcal{B}\left(\mathcal{H}_{2}\right) \to
\mathcal{B}\left(\mathcal{H}_{1}\right)$
satisfies the following identity:
\begin{equation*}
\mathrm{Tr}\left(S\Phi^{*}\left(A\right)\right) =
\mathrm{Tr}\left(\Phi\left(S\right)A\right) \quad
\forall S\in\mathcal{S}\left(\mathcal{H}_{1}\right), \,
\forall A\in\mathcal{B}\left(\mathcal{H}_{2}\right).
\end{equation*}
In other words, $\Phi^{*}$ is the adjoint of $\Phi$ with respect to the bilinear
form $B\left(A_{1}, \, A_{2}\right) = \mathrm{Tr}\left(A_{1}A_{2}\right)$.
\end{defn}

\begin{defn}
For every $k \in \mathbb{N}$ denote $\mathrm{id}_{k} \colon \mathcal{M}_{k} \to
\mathcal{M}_{k}$ the identity operator on the space of linear operators in
$\mathbb{C}^{k}$. The map $\Phi^{*}$ defined above is called $k$-positive if
$\Phi^{*}\otimes\mathrm{id}_{k}$ is positive, i.e. maps positive elements in
$\mathcal{B}\left(\mathcal{H}_{2}\right) \otimes \mathcal{M}_{k}$
to positive elements in
$\mathcal{B}\left(\mathcal{H}_{1}\right) \otimes \mathcal{M}_{k}$.
The map $\Phi^{*}$ is called completely positive if $\Phi^{*}$ is a $k$-positive
map for every $k \in \mathbb{N}$.
\end{defn}

Using these two notions, we can now formulate the definition of a quantum
channel.

\begin{defn}
A linear map $\Phi \colon \mathcal{S}\left(\mathcal{H}_{1}\right) \to
\mathcal{S}\left(\mathcal{H}_{2}\right)$ is called a quantum channel if its
adjoint $\Phi^{*}$ is a completely positive map.
\end{defn}

The following two theorems give alternative characterizations of completely
positive maps.

\begin{thm}[Choi, see \cite{Choi1975}]
Let $\Phi \colon \mathcal{M}_{n} \to \mathcal{M}_{m}$. Then $\Phi$ is a
completely positive map if and only if the following matrix is positive:
\begin{equation*}
C_{\Phi} = \sum_{i,j=1}^{n}E_{ij} \otimes \Phi\left(E_{ij}\right),
\end{equation*}
where the matrix $E_{ij} \in \mathbb{C}^{n \times n}$ has a $1$ in the $ij$-th
entry and $0$s everywhere else. We refer to $C_{\Phi}$ as the Choi matrix of
$\Phi$.
\end{thm}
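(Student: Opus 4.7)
The plan is to prove both implications of Choi's theorem by exploiting the correspondence between the Choi matrix and the action of $\Phi$ on the maximally entangled vector $|\omega\rangle = \sum_{i=1}^{n} |i\rangle \otimes |i\rangle \in \mathbb{C}^{n} \otimes \mathbb{C}^{n}$. The key identity underlying the whole argument is
\begin{equation*}
|\omega\rangle\langle\omega| = \sum_{i,j=1}^{n} E_{ij} \otimes E_{ij},
\end{equation*}
so applying $\mathrm{id}_{n} \otimes \Phi$ to both sides yields precisely $C_{\Phi}$. Thus the Choi matrix is the image of a rank-one positive operator under $\mathrm{id}_{n} \otimes \Phi$.

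For the forward direction, I would argue as follows. If $\Phi$ is completely positive then $\Phi \otimes \mathrm{id}_{n}$ is positive, and by symmetry of the tensor product $\mathrm{id}_{n} \otimes \Phi$ is positive as well (this can be seen by conjugating with the swap operator, which is a $*$-isomorphism). Since $|\omega\rangle\langle\omega| \ge 0$, its image $C_{\Phi} = (\mathrm{id}_{n} \otimes \Phi)(|\omega\rangle\langle\omega|)$ is positive. This direction is essentially a one-liner once the identity above is noted.

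The reverse direction is the substantive part and will be the main obstacle. Assuming $C_{\Phi} \ge 0$, I would diagonalize it (or use any positive decomposition) to write $C_{\Phi} = \sum_{k} |v_{k}\rangle\langle v_{k}|$ with $|v_{k}\rangle \in \mathbb{C}^{n} \otimes \mathbb{C}^{m}$. Expanding $|v_{k}\rangle = \sum_{i=1}^{n} |i\rangle \otimes |w_{k}^{(i)}\rangle$ and defining linear operators $V_{k} \colon \mathbb{C}^{n} \to \mathbb{C}^{m}$ by $V_{k}|i\rangle = |w_{k}^{(i)}\rangle$, I would then compute
\begin{equation*}
C_{\Phi} = \sum_{k} \sum_{i,j=1}^{n} E_{ij} \otimes |w_{k}^{(i)}\rangle\langle w_{k}^{(j)}| = \sum_{i,j=1}^{n} E_{ij} \otimes \Bigl(\sum_{k} V_{k} E_{ij} V_{k}^{*}\Bigr).
\end{equation*}
Matching this expansion with $C_{\Phi} = \sum_{i,j} E_{ij} \otimes \Phi(E_{ij})$ and using linear independence of the $E_{ij}$ in the first tensor factor forces $\Phi(E_{ij}) = \sum_{k} V_{k} E_{ij} V_{k}^{*}$ for all $i,j$. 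Since $\{E_{ij}\}$ is a basis of $\mathcal{M}_{n}$, linearity yields $\Phi(A) = \sum_{k} V_{k} A V_{k}^{*}$ for every $A \in \mathcal{M}_{n}$.

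Finally, I would conclude that $\Phi$ is completely positive by checking directly that each summand $A \mapsto V_{k} A V_{k}^{*}$ is completely positive: for any $k' \in \mathbb{N}$ and any positive $X \in \mathcal{M}_{n} \otimes \mathcal{M}_{k'}$, the element $(V_{k} \otimes I_{k'}) X (V_{k} \otimes I_{k'})^{*}$ is positive, and complete positivity is preserved under finite sums. The main technical care is in bookkeeping the indices and the choice of which tensor factor $\Phi$ acts on; everything else reduces to the decomposition of positive matrices into a sum of rank-one projections.
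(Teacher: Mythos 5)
Your proposal is correct: the identity $C_{\Phi} = (\mathrm{id}_{n}\otimes\Phi)\left(\left|\omega\right\rangle\left\langle\omega\right|\right)$ gives the forward direction, and the rank-one decomposition of a positive $C_{\Phi}$ into vectors $\left|v_{k}\right\rangle$, reshaped into operators $V_{k}$, yields the Kraus form $\Phi(A)=\sum_{k}V_{k}AV_{k}^{*}$, from which complete positivity follows; this is the standard (essentially Choi's original) argument. Note, however, that the paper does not prove this statement at all --- it is quoted as a known theorem with a citation to Choi's 1975 article --- so there is no in-paper proof to compare against; your argument would serve as a self-contained substitute for that citation. One small point of care: when matching the two expansions of $C_{\Phi}$, it is cleanest to observe that $\Phi(E_{ij})$ is exactly the $(i,j)$ block of $C_{\Phi}$ viewed as an $n\times n$ array of $m\times m$ blocks, i.e. $\Phi(E_{ij})=\left(\left\langle i\right|\otimes I_{m}\right)C_{\Phi}\left(\left|j\right\rangle\otimes I_{m}\right)$, which makes the ``linear independence in the first factor'' step explicit.
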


\begin{thm}[Kraus, see \cite{Kraus1983}]
Let $\Phi \colon \mathcal{M}_{n} \to \mathcal{M}_{m}$. Then $\Phi$ is a
completely positive map if and only if there exists a set of operators
$\left\{V_{i}\right\} \subseteq \mathbb{C}^{m \times n}$ such that
\begin{equation*}
\Phi\left(S\right) = \sum_{i}V_{i} S V_{i}^{*}.
\end{equation*}
The operators $V_{i}$ are then called the Kraus operators of $\Phi$.
\end{thm}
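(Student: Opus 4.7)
The plan is to prove both implications with the help of Choi's theorem (stated just above), which identifies complete positivity of $\Phi$ with positivity of its Choi matrix $C_{\Phi}$. The ``if'' direction is a short direct calculation, and the ``only if'' direction constructs the Kraus operators from a spectral decomposition of $C_{\Phi}$.

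For the ``if'' direction, I would assume $\Phi\left(S\right) = \sum_{i} V_{i} S V_{i}^{*}$ and observe that for every $k \in \mathbb{N}$ and every $X \in \mathcal{M}_{n} \otimes \mathcal{M}_{k}$,
\begin{equation*}
\left(\Phi \otimes \mathrm{id}_{k}\right)\left(X\right) = \sum_{i} \left(V_{i} \otimes I_{k}\right) X \left(V_{i} \otimes I_{k}\right)^{*}.
\end{equation*}
Each summand has the form $A X A^{*}$, so it is positive whenever $X$ is positive, and hence so is the sum. This shows $\Phi$ is $k$-positive for every $k$, i.e.\ completely positive.

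For the ``only if'' direction, I would suppose $\Phi$ is completely positive and invoke Choi's theorem to conclude $C_{\Phi} \ge 0$ in $\mathcal{M}_{n} \otimes \mathcal{M}_{m}$. Any positive matrix admits a spectral decomposition $C_{\Phi} = \sum_{k} \left|v_{k}\right\rangle\!\left\langle v_{k}\right|$ with vectors $v_{k} \in \mathbb{C}^{n} \otimes \mathbb{C}^{m}$. Under the canonical ``unvectorization'' isomorphism $\mathbb{C}^{n} \otimes \mathbb{C}^{m} \cong \mathbb{C}^{m \times n}$, each $v_{k}$ corresponds to a matrix $V_{k}$ defined by $v_{k} = \sum_{i,a} \left(V_{k}\right)_{a,i}\, e_{i} \otimes e_{a}$. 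Expanding $\left|v_{k}\right\rangle\!\left\langle v_{k}\right|$ in the tensor basis $\left\{E_{ij} \otimes E_{ab}\right\}$ and matching coefficients against the defining expression $C_{\Phi} = \sum_{i,j} E_{ij} \otimes \Phi\left(E_{ij}\right)$ yields $\Phi\left(E_{ij}\right) = \sum_{k} V_{k} E_{ij} V_{k}^{*}$, and linearity extends this identity to arbitrary $S \in \mathcal{M}_{n}$.

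The main obstacle is essentially notational: one must pin down the vectorization convention so that the indices on the Choi side line up correctly with the matrix products $V_{k} S V_{k}^{*}$ on the Kraus side. Once that bookkeeping is fixed, both implications collapse to short algebraic manipulations, and no deeper ingredients beyond Choi's theorem and the spectral theorem for positive matrices are required.
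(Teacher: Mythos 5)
Your argument is correct, but note that there is nothing in the paper to compare it against: Kraus' theorem is stated there purely as a cited background result (attributed to \cite{Kraus1983}), with no proof given, just as Choi's theorem is. What you have written is the standard Choi--Kraus derivation, and it holds up. The ``if'' direction is fine and in fact does not need Choi's theorem at all --- checking $\left(\Phi\otimes\mathrm{id}_{k}\right)\left(X\right)=\sum_{i}\left(V_{i}\otimes I_{k}\right)X\left(V_{i}\otimes I_{k}\right)^{*}$ on simple tensors and extending by linearity, then using that $AXA^{*}\ge 0$ whenever $X\ge 0$, is all that is required. The ``only if'' direction is also sound: writing $C_{\Phi}=\sum_{k}\left|v_{k}\right\rangle\!\left\langle v_{k}\right|$ (absorbing the square roots of the eigenvalues into the vectors) and unvectorizing gives $\left(V_{k}E_{ij}V_{k}^{*}\right)_{ab}=\left(V_{k}\right)_{a,i}\overline{\left(V_{k}\right)_{b,j}}$, which matches the coefficient of $E_{ij}\otimes E_{ab}$ in $\left|v_{k}\right\rangle\!\left\langle v_{k}\right|$, so $\Phi\left(E_{ij}\right)=\sum_{k}V_{k}E_{ij}V_{k}^{*}$ and linearity finishes the job. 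Your closing remark is accurate: the only real care needed is the vectorization convention, and since the theorem as stated concerns completely positive maps rather than channels, no trace-preservation condition $\sum_{i}V_{i}^{*}V_{i}=I$ enters.
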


Next, we define quantum channels with constant function of the output for pure
input states.

\begin{defn}
Let $\Phi \colon \mathcal{S}\left(\mathcal{H}_{1}\right) \to
\mathcal{S}\left(\mathcal{H}_{2}\right)$ be a quantum channel and $f \colon
\mathcal{S}\left(\mathcal{H}_{2}\right) \to \mathbb{R}$ be a function of the
output state. We say that $\Phi$ is a channel with constant function $f$ if for
all pure input states $S \in \mathcal{S}\left(\mathcal{H}_{1}\right)$ the value
$f\left(\Phi\left(S\right)\right)$ is the same.
\end{defn}

Note that lifting the restriction of the input to pure states makes the
resulting set of channels scarcer. For example, by the above definition, the
depolarizing and the transpose-depolarizing channels are channels with constant
Frobenius norm, but they do not have the same output Frobenius norm for mixed
input states as for pure input states. Since we would like to include these
channels into consideration, we restrict the constant output function condition
to pure input states.

In this paper, we assume that both underlying Hilbert spaces are
$n$-dimensional, i.e. $\mathcal{H}_{1} = \mathcal{H}_{2} = \mathbb{C}^{n}$.
In this case, the spaces of quantum states
$\mathcal{S}\left(\mathcal{H}_{1}\right)$ and
$\mathcal{S}\left(\mathcal{H}_{2}\right)$ are isomporphic to the space of
$n$-dimensional positive semidefinite Hermitian operators of trace $1$ and the
spaces of observables
$\mathcal{B}\left(\mathcal{H}_{1}\right)$ and
$\mathcal{B}\left(\mathcal{H}_{2}\right)$ are isomporphic to the space of linear
operators $\mathcal{M}_{n}$. Among the functions of the output state
$f \colon \mathcal{M}_{n} \to \mathbb{R}$ we consider only the Frobenius norm.

\begin{defn}
The Frobenius norm
$\left|\left| \cdot \right|\right|_{F} \colon \mathcal{M}_{n} \to \mathbb{R}$
is a matrix norm defined by the following formula:
\begin{equation*}
\left|\left| S \right|\right|_{F} = \sqrt{\sum_{j=1}^{n}s_{j}^{2}},
\end{equation*}
where $s_{1}, \, \dots, \, s_{n}$ are singular values of $S \in \mathcal{M}_{n}$.
If $S$ is a quantum state, then $s_{j}$ are eigenvalues of $S$.
\end{defn}

Instead of the Frobenius norm, we could use other functions of the output state,
for example, the Schatten norm, the Ky Fan norm (see \cite{Bhatia1996}, section
IV.2), the noncommutative $\ell_{p}$ norm (see \cite{Amosov2001}) or the von
Neumann entropy. However, these and other functions that are invariant with
respect to basis changes in the underlying Hilbert spaces are defined using the
eigenvalues of the output state. Therefore, to compute these functions in
general case, one needs to find the roots of the characteristic polynomial of an
$n \times n$ matrix, i.e. to solve a polynomial equation of degree $n$. On the
other hand, one can compute the Frobenius norm without finding the eigenvalues
by using the following identities:
\begin{equation*}
\left|\left| S \right|\right|_{F} =
\sqrt{\sum_{i,j=1}^{n}\left| s_{ij} \right|^{2}} =
\sqrt{\mathrm{Tr}\left(S^{*}S\right)},
\end{equation*}
where $s_{ij}$ denote the entries of $S$ and $S^{*}$ is the adjoint of $S$. For
this reason, we choose to only consider the Frobenius norm in this work.

\section{\label{sec:dim 2}Dimension $2$}

In this section, we introduce the notion of equivalence of quantum channels and
classify all $2$-dimensional quantum channels with constant Frobenius norm up to
equivalence.

King and Ruskai \cite{King2001} showed that every $2$-dimensional channel admits
the following representation:
\begin{equation} \label{eq:ruskai repr}
\Phi\left(S\right) = U_{2}\Lambda\left(U_{1} S U_{1}^{*}\right)U_{2}^{*},
\end{equation}
where $U_{1}$ and $U_{2}$ are unitary operators and $\Lambda$ has the following
representation in the Pauli basis
$\left\{I, \, \sigma_{x}, \, \sigma_{y}, \, \sigma_{z}\right\}$:
\begin{equation*}
\Lambda = \left(\begin{array}{cccc}
1     & 0           & 0           & 0 \\
t_{x} & \lambda_{x} & 0           & 0 \\
t_{y} & 0           & \lambda_{y} & 0 \\
t_{z} & 0           & 0           & \lambda_{z}
\end{array}\right),
\end{equation*}
where $t_{\alpha}, \, \lambda_{\alpha} \in \mathbb{R}$ for every
$\alpha \in \left\{x, \, y, \, z\right\}$.

As a direct generalization of this representation, we introduce the following
definition.

\begin{defn}
Let $\Phi_{1} \colon \mathcal{S}\left(\mathcal{H}_{1}\right) \to
\mathcal{S}\left(\mathcal{H}_{2}\right)$ and $\Phi_{2} \colon
\mathcal{S}\left(\mathcal{H}_{1}\right) \to
\mathcal{S}\left(\mathcal{H}_{2}\right)$ be two quantum channels. We call these
channels equivalent and write $\Phi_{1} \sim \Phi_{2}$ if there exist unitary
operators $U_{1} \colon \mathcal{S}\left(\mathcal{H}_{1}\right) \to
\mathcal{S}\left(\mathcal{H}_{1}\right)$ and $U_{2} \colon
\mathcal{S}\left(\mathcal{H}_{2}\right) \to
\mathcal{S}\left(\mathcal{H}_{2}\right)$ such that for every input state $S \in
\mathcal{S}\left(\mathcal{H}_{1}\right)$ the following identity holds:
\begin{equation*}
\Phi_{1}\left(S\right) = U_{2} \Phi_{2}\left(U_{1} S U_{1}^{*}\right) U_{2}^{*}.
\end{equation*}
In other words, equivalent channels can be said to be the same channel written
in two forms that differ by basis changes in the underlying Hilbert spaces.

Additionally, we say that two one-parameter families of channels
$\Phi_{1}\left(p,\,S\right)$ with parameter $p\in\left[p_{1},\,p_{2}\right]$ and
$\Phi_{2}\left(\tilde{p},\,S\right)$ with parameter
$\tilde{p}\in\left[\tilde{p}_{1},\,\tilde{p}_{2}\right]$ are equivalent if for
every $p\in\left[p_{1},\,p_{2}\right]$ there exists
$\tilde{p}\in\left[\tilde{p}_{1},\,\tilde{p}_{2}\right]$ such that the channels
$\Phi_{1}\left(p,\,S\right)$ and $\Phi_{2}\left(\tilde{p},\,S\right)$ (where $p$
and $\tilde{p}$ are fixed) are equivalent.
\end{defn}

The following theorem classifies $2$-dimensional quantum channels with constant
Frobenius norm up to equivalence. Although it can be derived from the
classification of all qubit channels \cite{Ruskai2002}, we provide its full
proof that only employs the notion of equivalence for the sake of completeness.

\begin{thm} \label{thm:2d classification}
Every $2$-dimensional quantum channel with constant Frobenius norm is either a
completely depolarizing channel or is equivalent to the depolarizing channel
\begin{equation*}
\Phi_{d}\left(p, \, S\right) = pS + \frac{1-p}{2}\mathrm{Tr} \, S.
\end{equation*}
\end{thm}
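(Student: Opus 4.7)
The plan is to first reduce, via the King--Ruskai representation \eqref{eq:ruskai repr}, to the case $\Phi = \Lambda$ with the stated triangular form in the Pauli basis; composing with extra unitary conjugations $U_1, U_2$ does not change the equivalence class, so this is without loss of generality. Writing a pure qubit state as $S = \tfrac{1}{2}(I + \vec{r}\cdot\vec{\sigma})$ with $|\vec{r}| = 1$, a direct computation gives $\Lambda(S) = \tfrac{1}{2}(I + (\vec{t} + \Lambda_0 \vec{r})\cdot\vec{\sigma})$ with $\Lambda_0 = \mathrm{diag}(\lambda_x, \lambda_y, \lambda_z)$, and combining $\|T\|_F^2 = \mathrm{Tr}(T^*T)$ with $\mathrm{Tr}(\sigma_\alpha \sigma_\beta) = 2\delta_{\alpha\beta}$ yields $\|\Lambda(S)\|_F^2 = \tfrac{1}{2}(1 + |\vec{t} + \Lambda_0 \vec{r}|^2)$.

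Next, I analyze when the right-hand side is constant as $\vec{r}$ ranges over the unit sphere. Expanding the square produces the polynomial $|\vec{t}|^2 + 2(\Lambda_0 \vec{t})^T \vec{r} + \vec{r}^T \Lambda_0^2 \vec{r}$ of total degree $2$ in the coordinates of $\vec{r}$; its restriction to $|\vec{r}| = 1$ is constant iff the polynomial equals $c + a(|\vec{r}|^2 - 1)$ identically, and matching coefficients yields the two conditions $\Lambda_0 \vec{t} = 0$ (i.e. $\lambda_\alpha t_\alpha = 0$ for each $\alpha \in \{x, y, z\}$) and $\Lambda_0^2 = \lambda^2 I$ for some $\lambda \geq 0$ (i.e. $\lambda_x^2 = \lambda_y^2 = \lambda_z^2$).

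The argument then splits into two cases. If $\lambda = 0$, then $\Lambda_0 = 0$ and $\Lambda$ sends every input to the single fixed state $\tfrac{1}{2}(I + \vec{t}\cdot\vec{\sigma})$, a completely depolarizing channel in the sense that it erases every input. If $\lambda > 0$, then the condition $\Lambda_0 \vec{t} = 0$ forces $\vec{t} = 0$, and $\Lambda_0 = \lambda D$ with $D = \mathrm{diag}(\epsilon_x, \epsilon_y, \epsilon_z)$, $\epsilon_\alpha \in \{\pm 1\}$. Here I invoke the standard fact that qubit unitary conjugation induces an $SO(3)$ rotation of the Bloch vector, so the equivalence $\Lambda \mapsto U_2 \Lambda(U_1 \cdot U_1^*) U_2^*$ acts on the lower-right block as $\Lambda_0 \mapsto R_2 \Lambda_0 R_1^T$ with $R_1, R_2 \in SO(3)$. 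If $\det D = 1$, then $D \in SO(3)$ and choosing $R_1 = D$, $R_2 = I$ gives $\Lambda_0 \sim \lambda I$, the Pauli form of $\Phi_d(\lambda, \cdot)$; if $\det D = -1$, then $-D \in SO(3)$ and choosing $R_1 = I$, $R_2 = -D$ gives $R_2 \Lambda_0 R_1^T = -\lambda D^2 = -\lambda I$, the Pauli form of $\Phi_d(-\lambda, \cdot)$. The one step requiring real care is this last one: reducing all eight sign patterns of $D$ to $\pm \lambda I$ using the \emph{restricted} $SO(3) \times SO(3)$ (rather than $O(3) \times O(3)$) equivalence induced by unitary basis changes.
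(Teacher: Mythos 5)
Your proposal is correct and follows essentially the same route as the paper: reduce via the King--Ruskai form, compute $\|\Lambda(S)\|_F^2$ in Bloch coordinates, deduce from constancy on the unit sphere the same dichotomy ($\Lambda_0=0$ with arbitrary admissible $\vec{t}$, versus $\vec{t}=0$ and $\lambda_x^2=\lambda_y^2=\lambda_z^2$), and then reduce the sign patterns to $\Phi_d(\pm\lambda,\cdot)$. The only cosmetic difference is that you dispatch all eight sign matrices at once through the $SU(2)\to SO(3)$ correspondence (choosing $R_1=D$ or $R_2=-D$, with the determinant distinction you flag), whereas the paper fixes $\lambda_z=p$ without loss of generality and writes out the explicit Pauli conjugations with a sign flip of the parameter --- these are exactly the unitaries realizing your rotations, so the two arguments coincide in substance.
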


\begin{proof}
Let $S$ be an arbitrary input state. Then the state $S' = U_{1} S U_{1}^{*}$ and
its image $\Lambda\left(S'\right)$ can be written in the following form:
\begin{equation*}
S' = \frac{1}{2}\left(I
   + a_{x}\sigma_{x} + a_{y}\sigma_{y} + a_{z}\sigma_{z}\right)
   = \frac{1}{2}\left(\begin{array}{cc}
     1 + a_{z}       & a_{x} - i a_{y} \\
     a_{x} + i a_{y} & 1 - a_{z} \end{array}\right),
\end{equation*}
\begin{equation*}
\Lambda\left(S'\right) = \frac{1}{2}\Lambda\left(I
   + a_{x}\sigma_{x} + a_{y}\sigma_{y} + a_{z}\sigma_{z}\right)
   = \frac{1}{2}\left(I +
     \sum_{i=x,y,z}\left(t_{i} + a_{i}\lambda_{i}\right)\sigma_{i}\right).
\end{equation*}
Here $a_{x}, \, a_{y}, \, a_{z} \in \mathbb{R}$ are the Stokes parameters. Note
that the condition $\mathrm{Tr} \, S' = 1$ is already satisfied and the
condition $\det S' \ge 0$ results in the following inequality:
\begin{equation*}
a_{x}^{2} + a_{y}^{2} + a_{z}^{2} \le 1.
\end{equation*}
Since the Frobenius norm is unitarily invariant,
\begin{align}
\left|\left| \Phi\left(S\right) \right|\right|_{F}^{2}
 & = \left|\left| U_{2}\Lambda\left(U_{1} S U_{1}^{*}\right)U_{2}^{*}
     \right|\right|_{F}^{2}
   = \left|\left| \Lambda\left(U_{1} S U_{1}^{*}\right) \right|\right|_{F}^{2}
   = \nonumber \\
 & = \frac{1}{2}\left(1 + \left(t_{x} + a_{x}\lambda_{x}\right)^{2} +
     \left(t_{y} + a_{y}\lambda_{y}\right)^{2} +
     \left(t_{z} + a_{z}\lambda_{z}\right)^{2}\right).
     \label{eq:2d ch f norm}
\end{align}
Therefore, $\Phi$ is a channel with constant Frobenius norm if and only if
\eqref{eq:2d ch f norm} is a constant polynomial of $a_{x}$, $a_{y}$ and
$a_{z}$, i.e. all combinations of $a_{x}$, $a_{y}$ and $a_{z}$ in
\eqref{eq:2d ch f norm} have zero coefficients. For a pure input state $S$, we
can see that $S'$ is also a pure state and thus
\begin{equation*}
a_{x}^{2} + a_{y}^{2} + a_{z}^{2} = 1.
\end{equation*}
It follows that \eqref{eq:2d ch f norm} is a constant polynomial in only two
cases.
\begin{enumerate}
\item $\lambda_{x} = \lambda_{y} = \lambda_{z} = 0$ and $t_{x}$, $t_{y}$, and
$t_{z}$ satisfy the following inequality:
\begin{equation*}
t_{x}^{2} + t_{y}^{2} + t_{z}^{2} \le 1.
\end{equation*}
In this case, $\Phi$ is a completely depolarizing channel.
\item $t_{x} = t_{y} = t_{z} = 0$ and $\left| \lambda_{x} \right| =
\left| \lambda_{y} \right| = \left| \lambda_{z} \right| = p$. In this case,
$\Phi$ is a diagonal channel. Without loss of generality, assume
$\lambda_{z} = p$. Then there are four variants:
\begin{enumerate}
\item $\lambda_{x} = \lambda_{y} = p$,
\item $\lambda_{x} = p$ and $\lambda_{y} = -p$,
\item $\lambda_{x} = \lambda_{y} = -p$,
\item $\lambda_{x} = -p$ and $\lambda_{y} = p$.
\end{enumerate}
Denote the corresponding channels $\Phi_{1}$--$\Phi_{4}$. Note that $\Phi_{1}$
is in fact the depolarizing channel:
\begin{equation*}
\Phi_{1}\left(p, \, S\right) =
\frac{1}{2}\left(I + pa_{x}\sigma_{x} + pa_{y}\sigma_{y} +
                 pa_{z}\sigma_{z}\right) =
\Phi_{d}\left(S\right).
\end{equation*}
The following identities show that the other three channels are equivalent to
the depolarizing channel:
\begin{align*}
\sigma_{y} \Phi_{2}\left(-p, \, S\right) \sigma_{y}
 & = \frac{1}{2}\left(\sigma_{y}\sigma_{y}
   - pa_{x}\sigma_{y}\sigma_{x}\sigma_{y}
   + pa_{y}\sigma_{y}\sigma_{y}\sigma_{y}
   - pa_{z}\sigma_{y}\sigma_{z}\sigma_{y}\right) = \\
 & = \frac{1}{2}\left(I
   + pa_{x}\sigma_{x} + pa_{y}\sigma_{y} + pa_{z}\sigma_{z}\right)
   = \Phi_{d}\left(p, \, S\right),
\end{align*}
\begin{align*}
\sigma_{z} \Phi_{3}\left(p, \, S\right) \sigma_{z}
 & = \frac{1}{2}\left(\sigma_{z}\sigma_{z}
   - pa_{x}\sigma_{z}\sigma_{x}\sigma_{z}
   - pa_{y}\sigma_{z}\sigma_{y}\sigma_{z}
   + pa_{z}\sigma_{z}\sigma_{z}\sigma_{z}\right) = \\
 & = \frac{1}{2}\left(I
   + pa_{x}\sigma_{x} + pa_{y}\sigma_{y} + pa_{z}\sigma_{z}\right)
   = \Phi_{d}\left(p, \, S\right),
\end{align*}
\begin{align*}
\sigma_{x} \Phi_{4}\left(-p, \, S\right) \sigma_{x}
 & = \frac{1}{2}\left(\sigma_{x}\sigma_{x}
   + pa_{x}\sigma_{x}\sigma_{x}\sigma_{x}
   - pa_{y}\sigma_{x}\sigma_{y}\sigma_{x}
   - pa_{z}\sigma_{x}\sigma_{z}\sigma_{x}\right) = \\
 & = \frac{1}{2}\left(I
   + pa_{x}\sigma_{x} + pa_{y}\sigma_{y} + pa_{z}\sigma_{z}\right)
   = \Phi_{d}\left(p, \, S\right).
\end{align*}
\end{enumerate}
\end{proof}

\begin{rmk}
Note that the theorem also holds for channels with constant von Neumann entropy,
Schatten $p$-norm (for $1 < p \le \infty$) and Ky Fan $1$-norm, since the same
proof works if the Frobenius norm is replaced with any injective function.

Since the Schatten $1$-norm and the Ky Fan $2$-norm in dimension $2$ give the
trace of the matrix, channels with constant Schatten $1$-norm and Ky Fan
$2$-norm constitute the whole set of $2$-dimensional channels, which was fully
described in \cite{Ruskai2002}.
\end{rmk}

\section{\label{sec:diag}Diagonal channels}

In this section, we introduce the set of diagonal quantum channels and prove the
criterion for diagonal channels to be channels with constant Frobenius norm,
which is used in Section~\ref{ssec:gen def} to define generalizations of the
four one-parameter families of channels that appear in the proof of
Theorem~\ref{thm:2d classification}.

\subsection{\label{ssec:diag def}Definition}

In essence, diagonal channels are defined as channels that have diagonal form in
a specific basis in $\mathcal{M}_{n}$. The basis we use consists of Hermitian
$n \times n$ matrices and is orthonormal with respect to the bilinear form
$B\left(A_{1}, \, A_{2}\right) = \mathrm{Tr}\left(A_{1}^{*}A_{2}\right)$. There
are two reasons why we choose a basis with these properties. Firstly, the space
of quantum states $\mathcal{S}\left(\mathbb{C}^{n}\right)$ is isomporhic space
of $n$-dimensional positive semidefinite Hermitian operators of trace $1$, hence
the Hermitian requirement. Secondly, we would like all decompositions of quantum
states over our basis to have real coefficients, hence the orthonormality
condition.

Before defining an orthonormal basis of Hermitian matrices in $\mathcal{M}_{n}$,
let us consider the $2$-dimensional case. Recall that the Pauli matrices
$\sigma_x, \, \sigma_y, \, \sigma_z$ together with the identity matrix $I$ form
an orthogonal basis in the space of $2 \times 2$ Hermitian matrices. Therefore,
we introduce an intuitive $n$-dimensional generalization of the Pauli matrices
as a starting point for defining our basis in $\mathcal{M}_{n}$.

\begin{defn}
Denote $N = \frac{n\left(n-1\right)}{2}$. We define generalized Pauli matrices
in dimension $n$ as follows:
\begin{equation*}\scalebox{.7}{$
\sigma_{0,1} = I = \left(\begin{array}{cccc}
1     & 0     & \dots & 0 \\
0     & 1     & \dots & 0 \\
\dots & \dots & \dots & \dots \\
0     & 0     & \dots & 1
\end{array}\right);
$}\end{equation*}
\begin{equation*}\scalebox{.7}{$
\sigma_{x,1} = \left(\begin{array}{cccc}
0     & 1     & \dots & 0 \\
1     & 0     & \dots & 0 \\
\dots & \dots & \dots & \dots \\
0     & 0     & \dots & 0
\end{array}\right), \,
\sigma_{x,2} = \left(\begin{array}{ccccc}
0     & 0     & 1     & \dots & 0 \\
0     & 0     & 0     & \dots & 0 \\
1     & 0     & 0     & \dots & 0 \\
\dots & \dots & \dots & \dots & \dots \\
0     & 0     & 0     & \dots & 0
\end{array}\right), \, \dots, \,
\sigma_{x,N} = \left(\begin{array}{cccc}
0     & \dots & 0     & 0 \\
\dots & \dots & \dots & \dots \\
0     & \dots & 0     & 1 \\
0     & \dots & 1     & 0
\end{array}\right);
$}\end{equation*}
\begin{equation*}\scalebox{.7}{$
\sigma_{y,1} = \left(\begin{array}{cccc}
0     & -i    & \dots & 0 \\
i     & 0     & \dots & 0 \\
\dots & \dots & \dots & \dots \\
0     & 0     & \dots & 0
\end{array}\right), \,
\sigma_{y,2} = \left(\begin{array}{ccccc}
0     & 0     & -i    & \dots & 0 \\
0     & 0     & 0     & \dots & 0 \\
i     & 0     & 0     & \dots & 0 \\
\dots & \dots & \dots & \dots & \dots \\
0     & 0     & 0     & \dots & 0
\end{array}\right), \, \dots, \,
\sigma_{y,N} = \left(\begin{array}{cccc}
0     & \dots & 0     & 0 \\
\dots & \dots & \dots & \dots \\
0     & \dots & 0     & -i \\
0     & \dots & i     & 0
\end{array}\right);
$}\end{equation*}
\begin{equation*}\scalebox{.7}{$
\sigma_{z,1} = \left(\begin{array}{cccc}
1     & 0     & \dots & 0 \\
0     & -1    & \dots & 0 \\
\dots & \dots & \dots & \dots \\
0     & 0     & \dots & 0
\end{array}\right), \,
\sigma_{z,2} = \left(\begin{array}{ccccc}
1     & 0     & 0     & \dots & 0 \\
0     & 0     & 0     & \dots & 0 \\
0     & 0     & -1    & \dots & 0 \\
\dots & \dots & \dots & \dots & \dots \\
0     & 0     & 0     & \dots & 0
\end{array}\right), \, \dots, \,
\sigma_{z,N} = \left(\begin{array}{cccc}
0     & \dots & 0     & 0 \\
\dots & \dots & \dots & \dots \\
0     & \dots & 1     & 0 \\
0     & \dots & 0     & -1
\end{array}\right).
$}\end{equation*}
\end{defn}

Note that by construction, the generalized Pauli matrices are Hermitian and
orthogonal with respect to the bilinear form $B\left(\cdot, \, \cdot\right)$
defined above. However, for $n \ge 3$ the matrices
$\sigma_{z,1}, \, \dots, \, \sigma_{z,N}$ are not linearly independent.
Therefore, in the definition of our basis, we replace them with other matrices.

\begin{defn}
Denote
\begin{equation*}
\mathcal{E} = \left\{ \frac{I}{\sqrt{n}},
\frac{\sigma_{x,1}}{\sqrt{2}}, \, \dots, \, \frac{\sigma_{x,N}}{\sqrt{2}}, \,
\frac{\sigma_{y,1}}{\sqrt{2}}, \, \dots, \, \frac{\sigma_{y,N}}{\sqrt{2}}, \,
\frac{M_{z,1}}{\sqrt{2}}, \, \frac{M_{z,2}}{\sqrt{6}}, \, \dots, \,
\frac{M_{z,n-1}}{\sqrt{\left(n-1\right)n}} \right\},
\end{equation*}
where $N = \frac{n\left(n-1\right)}{2}$ as before,
$\left\{ \sigma_{x,i}\right \}_{i=1}^{N}$ and
$\left\{ \sigma_{y,i}\right \}_{i=1}^{N}$ are generalized Pauli matrices, and
\begin{equation*}
M_{z,1} = \left(\begin{array}{cccc}
1     & 0     & \dots & 0 \\
0     & -1    & \dots & 0 \\
\dots & \dots & \dots & \dots \\
0     & 0     & \dots & 0
\end{array}\right), \,
M_{z,2}=\left(\begin{array}{ccccc}
1     & 0     & 0     & \dots & 0 \\
0     & 1     & 0     & \dots & 0 \\
0     & 0     & -2    & \dots & 0 \\
\dots & \dots & \dots & \dots & \dots \\
0     & 0     & 0     & \dots & 0
\end{array}\right), \, \dots,
\end{equation*}
\begin{equation*}
M_{z,n-1} = \left(\begin{array}{cccc}
1     & \dots & 0     & 0 \\
\dots & \dots & \dots & \dots \\
0     & \dots & 1     & 0 \\
0     & \dots & 0     & -\left(n-1\right)
\end{array}\right).
\end{equation*}
\end{defn}

Note that, by construction, $\mathcal{E}$ is a basis in $\mathcal{M}_{n}$ over
$\mathbb{C}$, consists of Hermitian matrices, and is orthonormal with respect
to the bilinear form $B\left(\cdot, \, \cdot\right)$. However, there are also
some disadvantages to using the basis $\mathcal{E}$. For example, a permutation
of basis vectors of the underlying Hilbert space $\mathbb{C}^{n}$ results in a
permutation of the generalized Pauli matrices, but the matrices
$e_{z,1}, \, \dots, \, e_{z,n-1}$ are mapped to nontrivial linear combinations
of the elements of the basis $\mathcal{E}$.

Now, using the basis $\mathcal{E}$, we define the set of diagonal channels.

\begin{defn}
We say that $\Phi \colon \mathcal{S}\left(\mathbb{C}^{n}\right) \to
\mathcal{S}\left(\mathbb{C}^{n}\right)$ is a diagonal channel if $\Phi$ has
diagonal form in the basis $\mathcal{E} = \left\{e_{i}\right\}_{i=0}^{n^2-1}$,
i.e.
\begin{equation} \label{eq:diag ch def}
\Phi = \mathrm{diag}\left(1, \, t_{1}, \, \dots, \, t_{n^{2}-1}\right).
\end{equation}

Note that since $\mathcal{E}$ is an orthonormal basis of Hermitian matrices, the
adjoint operator $\Phi^{*}$ of a diagonal channel $\Phi$ has the same form in
$\mathcal{E}$:
\begin{equation*}
\Phi^{*} = \mathrm{diag}\left(1, \, t_{1}, \, \dots, \, t_{n^{2}-1}\right).
\end{equation*}
In this sense, diagonal channels can be said to be self-adjoint.

When appropriate, we also use the following notation for the basis elements and
diagonal channels:
\begin{align*}
\mathcal{E} & = \left\{e_{0,1}, \, e_{x,1}, \, \dots, \, e_{x,N}, \,
    e_{y,1}, \, \dots, \, e_{y,N}, \, e_{z,1}, \, \dots, \, e_{z,n-1}\right\},\\
\Phi & = \mathrm{diag}\left(1, \, t_{x,1}, \, \dots, \, t_{x,N}, \,
    t_{y,1}, \, \dots, \, t_{y,N}, \, t_{z,1}, \, \dots, \, t_{z,n-1}\right).
\end{align*}
\end{defn}

\subsection{\label{ssec:diag f}Diagonal channels with constant Frobenius norm}

The following theorem gives a criterion for diagonal channels to be channels
with constant Frobenius norm.

\begin{thm} \label{thm:diag ch cfn crit}
Let $\Phi \colon \mathcal{S}\left(\mathbb{C}^{n}\right) \to
\mathcal{S}\left(\mathbb{C}^{n}\right)$ be a diagonal channel.
Then $\Phi$ is a channel with constant Frobenius norm if and only if
$\left| t_{i} \right| = \left| t_{j} \right|$
for all $i, \, j \in \overline{1, \, n^{2}-1}$.
\end{thm}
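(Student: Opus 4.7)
The plan is to reduce $\|\Phi(S)\|_F^2$ to a quadratic form in the Bloch-type coefficients of $S$ and then probe the resulting constancy condition with three families of pure states.

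Using orthonormality of $\mathcal{E}$, any state decomposes as $S = I/n + \sum_{i=1}^{n^2-1} a_i e_i$ with $a_i = \mathrm{Tr}(S e_i) \in \mathbb{R}$. Since $\Phi$ is diagonal in $\mathcal{E}$ (and fixes $I/n$), one obtains $\|\Phi(S)\|_F^2 = 1/n + \sum_i t_i^2 a_i^2$, and for pure $S$ the identity $\|S\|_F^2 = 1$ forces $\sum_i a_i^2 = (n-1)/n$. If all $|t_i|$ coincide, this sum collapses to a constant, giving the easy direction.

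For the converse I would evaluate the identity $\sum_i t_i^2 a_i^2 = \text{const}$ on three families of pure inputs.
\begin{enumerate}
\item $|\psi_\phi\rangle = (|p\rangle + e^{i\phi}|q\rangle)/\sqrt{2}$: only $a_{x,i_{pq}} \propto \cos\phi$ and $a_{y,i_{pq}} \propto \sin\phi$ depend on $\phi$ (the $a_{z,k}$ depend only on the diagonal of $S$), so $\phi$-independence forces $t_{x,i}^2 = t_{y,i}^2 =: u_i$ for each $i$.
\item Basis states $|p\rangle\langle p|$: the only nonzero coefficients are the $a_{z,k}^{(p)}$, with the explicit staircase pattern $1/\sqrt{k(k+1)}$ for $p \le k$, $-\sqrt{k/(k+1)}$ for $p = k+1$, and $0$ otherwise. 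Subtracting the constancy equation at $|p+1\rangle\langle p+1|$ from that at $|p\rangle\langle p|$ leaves only the $k = p-1$ and $k = p$ terms and collapses to $t_{z,p-1}^2 = t_{z,p}^2$; iterating gives a common value $\tau$ for all $t_{z,k}^2$.
\item $|\psi_\alpha\rangle = \cos\alpha|p\rangle + \sin\alpha|q\rangle$: orthogonality of projectors $\mathrm{Tr}(|p\rangle\langle p|\,|q\rangle\langle q|) = 0$ implies $\sum_k a_{z,k}^{(p)} a_{z,k}^{(q)} = -1/n$, and using this together with the value of $\tau$ from step (2), the $\alpha$-dependent part of the output Frobenius norm reduces to a multiple of $\sin^2(2\alpha)(u_{i_{pq}} - \tau)$; so $\alpha$-independence gives $u_{i_{pq}} = \tau$.
\end{enumerate}
Combining (1)--(3) yields $|t_i| = \sqrt{\tau}$ for every $i$.

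The main technical obstacle is step (2): one must recognize that the staircase sparsity of the $a_{z,k}^{(p)}$'s makes consecutive differences of the constancy equations collapse to a two-term recurrence, rather than coupling all the $t_{z,k}^2$ at once. Steps (1) and (3) are then routine, since they reduce to matching coefficients of elementary trigonometric polynomials in $\phi$ and in $\sin^2(2\alpha)$.
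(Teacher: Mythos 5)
Your proposal is correct and follows essentially the same route as the paper: expand states in the orthonormal Hermitian basis so that $\left|\left|\Phi(S)\right|\right|_F^2 = \frac{1}{n} + \sum_i t_i^2 a_i^2$, then probe with the basis projectors (your telescoping "staircase" step for the $t_{z,k}$ is exactly the paper's first step), with two-level superpositions differing by a phase (forcing $|t_{x,i}| = |t_{y,i}|$), and with real two-level superpositions to tie $|t_{x,i}|$ to the common $|t_z|$. Your continuous families in $\phi$ and $\alpha$, together with the overlap identity $\sum_k a_{z,k}^{(p)} a_{z,k}^{(q)} = -\frac{1}{n}$, are only mild repackagings of the paper's discrete probe states $S_k$, $\tilde{S}_k$, $S_k'$ and its evaluation of $B\left(S_{k_1}+S_{k_2}, \, S_{k_1}+S_{k_2}\right)$, so the two arguments coincide in substance.
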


\begin{proof}
First, let us calculate the Frobenius norm of an arbitrary state
$S \in \mathcal{S}\left(\mathbb{C}^{n}\right)$ and its image
$\Phi\left(S\right)$. Note that $S$ can be represented as
\begin{equation*}
S = \frac{1}{\sqrt{n}}e_{0,1} + \sum_{i=1}^{n^{2}-1}a_{i}e_{i},
\end{equation*}
where $e_{i} \in \mathcal{E}$ are basis matrices. If $\Phi$ is a diagonal
channel of the form \eqref{eq:diag ch def}, then
\begin{equation*}
\left|\left| S \right|\right|_{F}^{2}
 = \frac{1}{n} + \sum_{i=1}^{n^{2}-1}a_{i}^{2}, \quad
\left|\left| \Phi\left(S\right) \right|\right|_{F}^{2}
 = \frac{1}{n} + \sum_{i=1}^{n^{2}-1}a_{i}^{2}t_{i}^{2}.
\end{equation*}
Additionally, if $S$ is a pure state, then
\begin{equation*}
\left|\left| S \right|\right|_{F}^{2}
 = \frac{1}{n}+\sum_{i=1}^{n^{2}-1}a_{i}^{2} = 1.
\end{equation*}

Now, suppose $\Phi$ is a diagonal channel with
$\left| t_{i} \right| = \left| t_{j} \right|$ for all
$i, \, j \in \overline{1, \, n^{2}-1}$.
Then for any pure state $S$ we have
\begin{equation*}
\left|\left| \Phi\left(S\right) \right|\right|_{F}^{2}
 = \frac{1}{n} + t_{1}^{2}\left(1-\frac{1}{n}\right) = \mathrm{const}.
\end{equation*}
Thus $\Phi$ is a channel with constant Frobenius norm.

Conversely, suppose $\Phi$ is a diagonal channel with constant Frobenius norm.
We prove that $\left| t_{i} \right| = \left|t_{j}\right|$ for all
$i, \, j \in \overline{1, \, n^{2}-1}$ in three steps.
\begin{enumerate}
\item First, we prove that
$\left| t_{z,i} \right| = \left| t_{z,j} \right| =: \left| t_{z} \right|$
for all $i, \, j \in \overline{1, \, n-1}$. To show this, we introduce the
following set of unit vectors in $\mathbb{C}^{n}$:
\begin{equation*}
\left| \psi_{0} \right\rangle
 = \left(\begin{array}{c} 1 \\ 0 \\ \dots \\ 0 \end{array}\right), \quad
\left| \psi_{1} \right\rangle
 = \left(\begin{array}{c} 0 \\ 1 \\ 0 \\ \dots \\ 0 \end{array}\right), \quad
\dots, \quad \left| \psi_{n-1} \right\rangle
 = \left(\begin{array}{c} 0 \\ \dots \\ 0 \\ 1 \end{array}\right).
\end{equation*}
Using these vectors, we define a set of pure states
$S_{k} = \left| \psi_{k} \right> \left< \psi_{k} \right|$
with the following decompositions in $\mathcal{E}$:
\begin{equation*}
S_{k} = \frac{1}{\sqrt{n}}e_{0,1} +
\sum_{i=1}^{n-1}B\left(e_{z,i}, \, S_{k}\right)e_{z,i}.
\end{equation*}
For every $k \in \overline{1, \, n-1}$ we have
\begin{equation*}
B\left(e_{z,i}, \, S_{k}\right)
 = \frac{B\left(M_{z,i}, \, S_{k}\right)}{\sqrt{i\left(i+1\right)}}
 = \frac{1}{\sqrt{i\left(i+1\right)}} \cdot \begin{cases}
     0,  & i<k\\
     -i, & i=k\\
     1,  & i>k
   \end{cases}.
\end{equation*}
Since $\Phi$ is a channel with constant Frobenius norm, for every
$k \in \overline{1, \, n-1}$ we have
\begin{equation*}
0 = \left|\left| \Phi\left(S_{k}  \right) \right|\right|_{F}^{2}
  - \left|\left| \Phi\left(S_{k+1}\right) \right|\right|_{F}^{2}
  = t_{z,k}^{2} \cdot \frac{k}{k+1} - t_{z,k+1}^{2} \cdot \frac{k}{k+1}.
\end{equation*}
Therefore, $\left| t_{z,k} \right| = \left| t_{z,k+1} \right|$ for every
$k \in \overline{1, \, n-1}$. Thus
$\left| t_{z,i} \right| = \left|t_{z,j} \right| =: \left| t_{z} \right|$
for all $i, \, j \in \overline{1, \, n-1}$.

\item Second, we prove that
$\left| t_{x,i} \right| = \left| t_{y,i} \right|$ for all
$i \in \overline{1, \, N}$. To show this, we introduce another two sets of unit
vectors in $\mathbb{C}^{n}$:
\begin{equation*}\scalebox{.8}{$
\left| \xi_{1} \right\rangle = \frac{1}{\sqrt{2}}
\left(\begin{array}{c} 1 \\ 1 \\ 0 \\ \dots \\ 0 \end{array}\right), \,
\left| \eta_{1} \right\rangle =\frac{1}{\sqrt{2}}
\left(\begin{array}{c} i \\ 1 \\ 0 \\ \dots \\ 0 \end{array}\right), \,\dots,\,
\left| \xi_{N} \right\rangle = \frac{1}{\sqrt{2}}
\left(\begin{array}{c} 0 \\ \dots \\ 0 \\ 1 \\ 1 \end{array}\right), \,
\left| \eta_{N} \right\rangle = \frac{1}{\sqrt{2}}
\left(\begin{array}{c} 0 \\ \dots \\ 0 \\ i \\ 1 \end{array}\right).
$}\end{equation*}
Using these vectors, we define two sets of pure states
$\tilde{S}_{k} = \left| \xi_{k} \right\rangle \left\langle \xi_{k} \right|$
and $S_{k}' = \left| \eta_{k} \right\rangle \left\langle \eta_{k} \right|$
with the following decompositions in $\mathcal{E}$:
\begin{align*}
\tilde{S}_{k}
 & = \frac{1}{\sqrt{n}}e_{0,1}
   + \sum_{i=1}^{n-1}B\left(e_{z,i}, \, \tilde{S}_{k}\right)e_{z,i}
   + \frac{1}{\sqrt{2}}e_{x,k}, \\
S_{k}'
 & = \frac{1}{\sqrt{n}}e_{0,1}
   + \sum_{i=1}^{n-1}B\left(e_{z,i}, \, S_{k}'\right)e_{z,i}
   + \frac{1}{\sqrt{2}}e_{y,k}.
\end{align*}
For every $k \in \overline{1, \, N}$ and every $i \in \overline{1, \, n-1}$, we
have
\begin{equation*}
B\left(e_{z,i}, \, \tilde{S}_{k}\right) = B\left(e_{z,i}, \, S_{k}'\right).
\end{equation*}
Since $\Phi$ is a channel with constant Frobenius norm, for every
$k \in \overline{1, \, N}$ we have
\begin{equation*}
0 = \left|\left| \Phi\left(\tilde{S}_{k}\right) \right|\right|_{F}^{2}
  - \left|\left| \Phi\left(S_{k}'\right) \right|\right|_{F}^{2}
  = t_{x,k}^{2} \cdot \left(\frac{1}{\sqrt{2}}\right)^{2}
  - t_{y,k}^{2} \cdot \left(\frac{1}{\sqrt{2}}\right)^{2}.
\end{equation*}
Therefore, $\left| t_{x,k} \right| = \left| t_{y,k} \right|$ for all
$k \in \overline{1, \, N}$.

\item Finally, we prove that
$\left| t_{x,i} \right| = \left| t_{z} \right|$ for all
$i \in \overline{1, \, N}$. To show this, we use the pure states $S_{k}$ and
$\tilde{S}_{l}$ defined above.
\begin{enumerate}
\item For an arbitrary index $l \in \overline{1, \, N}$, there exist indices
$k_{1}, \, k_{2} \in \overline{0, \, n-1}$ such that
\begin{equation*}
\tilde{S}_{l}
 = \frac{1}{\sqrt{2}}e_{x,l} + \frac{1}{2}\left(S_{k_{1}} + S_{k_{2}}\right).
\end{equation*}
The linear property of the form $B\left( \cdot , \,  \cdot \right)$ implies
\begin{equation*}
\Phi\left(\tilde{S}_{l}\right)
 = \frac{1}{\sqrt{n}}e_{0,1} + \frac{1}{\sqrt{2}}t_{x,l}e_{x,l}
 + \sum_{i=1}^{n-1}\frac{1}{2}B\left(e_{z,i}, \, S_{k_{1}} + S_{k_{2}}\right)
   t_{z,i}e_{z,i}
\end{equation*}
and
\begin{align*}
2 & = B\left(S_{k_{1}} + S_{k_{2}}, \, S_{k_{1}} + S_{k_{2}}\right) = \\
  & = B\left(\frac{2}{\sqrt{n}}e_{0,1} +
      \sum_{i=1}^{n-1}B\left(e_{z,i}, \, S_{k_{1}} + S_{k_{2}}\right), \,
        S_{k_{1}} + S_{k_{2}}\right) = \\
  & = \frac{4}{n} + \sum_{i=1}^{n-1}\left(B\left(e_{z,i}, \,
        S_{k_{1}} + S_{k_{2}}\right)\right)^{2}.
\end{align*}
Therefore, the Frobenius norm of $\Phi\left(\tilde{S}_{l}\right)$ can be written
in the following form:
\begin{align}
\left|\left| \Phi\left(\tilde{S}_{l}\right) \right|\right|_{F}^{2}
 & = \frac{1}{n} + t_{x,l}^{2} \cdot \left(\frac{1}{\sqrt{2}}\right)^{2} +
     \frac{t_{z}^{2}}{4}\sum_{i=1}^{n-1}\left(B\left(e_{z,i}, \,
       S_{k_{1}} + S_{k_{2}}\right)\right)^{2} = \nonumber \\
 & = \frac{1}{n} + t_{x,l}^{2} \cdot \frac{1}{2} +
     t_{z}^{2} \cdot \frac{1}{4}\left(2-\frac{4}{n}\right).
     \label{eq:diag f norm X}
\end{align}

\item Let $k \in \overline{0, \, n-1}$ be an arbitrary index. The linear
property of the form $B\left( \cdot , \,  \cdot \right)$ implies
\begin{align*}
1 & = B\left(S_{k}, \, S_{k}\right) = \\
  & = B\left(\frac{1}{\sqrt{n}}e_{0,1} + \sum_{i=1}^{n-1}
      B\left(e_{z,i}, \, S_{k}\right)e_{z,i}, \, S_{k}\right) = \\
  & = \frac{1}{n}
    + \sum_{i=1}^{n-1}\left(B\left(e_{z,i}, \, S_{k}\right)\right)^{2}.
\end{align*}
Therefore, the Frobenius norm of $\Phi\left(S_{k}\right)$ can be written
in the following form:
\begin{equation}  \label{eq:diag f norm Z}
\left|\left| \Phi\left(S_{k}\right) \right|\right|_{F}^{2}
 = \frac{1}{n} + t_{z}^{2}\sum_{i=1}^{n}\left(B\left(e_{z,i}, \,
       S_{k}\right)\right)^{2}
 = \frac{1}{n} + t_{z}^{2}\left(1-\frac{1}{n}\right).
\end{equation}

\item Since $\Phi$ is a map with constant Frobenius norm, from
\eqref{eq:diag f norm X} and \eqref{eq:diag f norm Z} it follows that
\begin{equation*}
0 = \left|\left| \Phi\left(\tilde{S}_{l}\right) \right|\right|_{F}^{2}
  - \left|\left| \Phi\left(S_{k}\right) \right|\right|_{F}^{2}
  = \frac{1}{2}t_{x,l}^{2} - \frac{1}{2}t_{z}^{2}.
\end{equation*}
Therefore, $\left| t_{x,l} \right| = \left| t_{z} \right|$ for every
$l \in \overline{1, \, N}$.
\end{enumerate}
\end{enumerate}

The identities above prove that $\left| t_{i} \right| = \left| t_{j} \right|$
for every $i, \, j \in \overline{1, \, n^{2}-1}$.
\end{proof}

\section{\label{sec:gen}Generalizations in dimension $n$}

In this section, we generalize four one-parameter families of $2$-dimensional
channels that appear in the proof of Theorem~\ref{thm:2d classification} and
prove that the generalizations are inequivalent in any dimension $n \ge 3$.

\subsection{\label{ssec:gen def}Definition}

By generalizations of the four $2$-dimensional channels from
Theorem~\ref{thm:2d classification} we mean $n$-dimensional channels that are
channels with constant Frobenius norm and in the case $n = 2$ coincide with the
channels from Theorem~\ref{thm:2d classification}. The idea is to consider the
four $2$-dimensional channels as special cases of $n$-dimensional diagonal
channels, defined in Section~\ref{ssec:diag def}, that multiply the $z$ set of
basis matrices (i.e. $e_{z,1}, \, \dots, \, e_{z,n-1}\in\mathcal{E}$) by $p$ and
multiply the $x$ and $y$ sets of basis matrices (i.e.
$e_{x,1}, \, \dots, \, e_{x,N}\in\mathcal{E}$ and
$e_{y,1}, \, \dots, \, e_{y,N}\in\mathcal{E}$ respectively) by $\pm p$. Then
Theorem~\ref{thm:diag ch cfn crit} insures that the resulting $n$-dimensional
channels are channels with constant Frobenius norm. Therefore, we consider the
following four channels (defined by their form in the basis $\mathcal{E}$):
\begin{itemize}
\item the generalization of the case $\lambda_{x} = \lambda_{y} = p$:
\begin{equation} \label{eq:dep diag form}
\Phi = \mathrm{diag}\left(1, \,
\underbrace{p, \, \dots, \, p}_{N}, \,
\underbrace{p, \, \dots, \, p}_{N}, \,
\underbrace{p, \, \dots, \, p}_{n-1}\right),
\end{equation}
\item the generalization of the case $\lambda_{x} = -\lambda_{y} = p$:
\begin{equation} \label{eq:trd diag form}
\Phi = \mathrm{diag}\left(1, \,
\underbrace{p, \, \dots, \, p}_{N}, \,
\underbrace{-p, \, \dots, \, -p}_{N}, \,
\underbrace{p, \, \dots, \, p}_{n-1}\right),
\end{equation}
\item the generalization of the case $\lambda_{x} = \lambda_{y} = -p$:
\begin{equation} \label{eq:dcq diag form}
\Phi = \mathrm{diag}\left(1, \,
\underbrace{-p, \, \dots, \, -p}_{N}, \,
\underbrace{-p, \, \dots, \, -p}_{N}, \,
\underbrace{p, \, \dots, \, p}_{n-1}\right),
\end{equation}
\item the generalization of the case $\lambda_{x} = -\lambda_{y} = -p$:
\begin{equation} \label{eq:tcq diag form}
\Phi = \mathrm{diag}\left(1, \,
\underbrace{-p, \, \dots, \, -p}_{N}, \,
\underbrace{p, \, \dots, \, p}_{N}, \,
\underbrace{p, \, \dots, \, p}_{n-1}\right).
\end{equation}
\end{itemize}
Plugging the explicit form of the basis matrices into the definitions
\eqref{eq:dep diag form}--\eqref{eq:tcq diag form}, we find that
\begin{itemize}
\item the channel \eqref{eq:dep diag form} is in fact the depolarizing channel:
\begin{equation} \label{eq:dep def}
\Phi_{d}\left(p, \, S\right) = pS + \frac{1-p}{n}\mathrm{Tr} \, S;
\end{equation}
\item the channel \eqref{eq:trd diag form} is in fact the transpose-depolarizing
channel:
\begin{equation} \label{eq:trd def}
\Phi_{t}\left(p, \, S\right) = pS^{T} + \frac{1-p}{n}\mathrm{Tr} \, S;
\end{equation}
\item the channel \eqref{eq:dcq diag form} can be written in the following form:
\begin{equation} \label{eq:dcq def}
\Phi_{dcq}\left(p, \, S\right) = -pS + \frac{1-p}{n}\mathrm{Tr} \, S +
2p\sum_{i=1}^{n} \left\langle \psi_{i} \left| S \right|\psi_{i}\right\rangle
                 \left| \psi_{i} \right\rangle \left\langle \psi_{i} \right|,
\end{equation}
which is why we refer to it as the hybrid
depolarizing-classical-quantum channel (or the DCQ channel for short);
\item the channel \eqref{eq:tcq diag form} can be written in the following form:
\begin{equation} \label{eq:tcq def}
\Phi_{tcq}\left(p, \, S\right) = -pS^{T} + \frac{1-p}{n}\mathrm{Tr} \, S +
2p\sum_{i=1}^{n} \left\langle \psi_{i} \left| S \right| \psi_{i} \right\rangle
                 \left| \psi_{i} \right\rangle \left\langle \psi_{i} \right|,
\end{equation}
which is why we refer to it as the hybrid
transpose-depolarizing-classical-quantum channel (or the TCQ channel for short).
\end{itemize}

Although we refer to the maps \eqref{eq:dep def}--\eqref{eq:tcq def} as
channels, it is the case only for specific values of the parameter $p$, stated
in the following theorem.

\begin{thm} \label{thm:p range}
\begin{itemize}
\item The map $\Phi_{d}$ (defined in \eqref{eq:dep def}) is a quantum channel
if and only if the parameter $p$ satisfies the condition:
\begin{equation} \label{eq:dep p range}
-\frac{1}{n^{2}-1} \le p \le 1.
\end{equation}
\item The map $\Phi_{t}$ (defined in \eqref{eq:trd def}) is a quantum channel
if and only if the parameter $p$ satisfies the condition:
\begin{equation} \label{eq:trd p range}
-\frac{1}{n-1} \le p \le \frac{1}{n+1}.
\end{equation}
\item The map $\Phi_{dcq}$ (defined in \eqref{eq:dcq def}) is a quantum channel
if and only if the parameter $p$ satisfies the condition:
\begin{equation} \label{eq:dcq p range}
-\frac{1}{2n-1} \le p \le \frac{1}{\left(n-1\right)^{2}}.
\end{equation}
\item The map $\Phi_{tcq}$ (defined in \eqref{eq:tcq def}) is a quantum channel
if and only if the parameter $p$ satisfies the condition:
\begin{equation} \label{eq:tcq p range}
-\frac{1}{n-1} \le p \le \frac{1}{n+1}.
\end{equation}
\end{itemize}
\end{thm}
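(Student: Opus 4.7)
The plan is to invoke Choi's theorem, which reduces each claim to a positivity check on the Choi matrix $C_{\Phi}=\sum_{ij}E_{ij}\otimes\Phi(E_{ij})$. Trace preservation is immediate for all four maps: applying $\mathrm{Tr}$ to the right-hand sides of \eqref{eq:dep def}--\eqref{eq:tcq def} and using $\mathrm{Tr}\,I=n$ and $\sum_{i=1}^{n}\langle\psi_{i}|S|\psi_{i}\rangle=\mathrm{Tr}\,S$ yields $\mathrm{Tr}\,\Phi(S)=\mathrm{Tr}\,S$. So the entire problem is to determine, in each of the four cases, the range of $p$ for which $C_{\Phi}\succeq 0$.

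The key observation is that all four Choi matrices are linear combinations of the same handful of pairwise commuting operators on $\mathbb{C}^{n}\otimes\mathbb{C}^{n}$: the identity $I$; the rank-one operator $n|\Omega\rangle\langle\Omega|=\sum_{ij}E_{ij}\otimes E_{ij}$, where $|\Omega\rangle=\tfrac{1}{\sqrt{n}}\sum_{i}|ii\rangle$; the swap $F=\sum_{ij}E_{ij}\otimes E_{ji}$; and the diagonal projector $P=\sum_{i}E_{ii}\otimes E_{ii}$ onto $V_{\mathrm{diag}}=\mathrm{span}\{|ii\rangle\}$. Expanding $\Phi(E_{kl})$ from each of \eqref{eq:dep def}--\eqref{eq:tcq def} gives
\begin{align*}
C_{\Phi_{d}}&=pn|\Omega\rangle\langle\Omega|+\tfrac{1-p}{n}I,\\
C_{\Phi_{t}}&=pF+\tfrac{1-p}{n}I,\\
C_{\Phi_{dcq}}&=-pn|\Omega\rangle\langle\Omega|+\tfrac{1-p}{n}I+2pP,\\
C_{\Phi_{tcq}}&=-pF+\tfrac{1-p}{n}I+2pP.
\end{align*}
These four operators commute pairwise (use $F|\Omega\rangle=|\Omega\rangle$, $P|\Omega\rangle=|\Omega\rangle$, and the fact that $F$ is the identity on $V_{\mathrm{diag}}$), so each Choi matrix is simultaneously diagonalized on a small number of invariant subspaces and its spectrum can be read off block by block.

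For $\Phi_{d}$ the only blocks are $\mathbb{C}|\Omega\rangle$ and its orthogonal complement, yielding eigenvalues $pn+\tfrac{1-p}{n}$ and $\tfrac{1-p}{n}$; requiring both to be nonnegative reproduces \eqref{eq:dep p range}. For $\Phi_{t}$ the swap contributes $\pm 1$ on the symmetric and antisymmetric subspaces, giving $\pm p+\tfrac{1-p}{n}$ and hence \eqref{eq:trd p range}. For the two hybrid channels the relevant refinement is $\mathbb{C}|\Omega\rangle\oplus(V_{\mathrm{diag}}\ominus\mathbb{C}|\Omega\rangle)\oplus V_{\mathrm{diag}}^{\perp}$, with $V_{\mathrm{diag}}^{\perp}$ further split into its symmetric and antisymmetric parts in the $\Phi_{tcq}$ case. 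A short calculation evaluates the three (resp.\ four) block eigenvalues as $\tfrac{1-(n-1)^{2}p}{n},\ \tfrac{1+(2n-1)p}{n},\ \tfrac{1-p}{n}$ for $\Phi_{dcq}$ and $\tfrac{1+(n-1)p}{n},\ \tfrac{1-(n+1)p}{n}$ for $\Phi_{tcq}$; requiring each to be nonnegative yields exactly \eqref{eq:dcq p range} and \eqref{eq:tcq p range} (noting $\tfrac{1}{(n-1)^{2}}\le 1$ for $n\ge 2$, so the bound $p\le 1$ is redundant for $\Phi_{dcq}$). The only step requiring any thought is the simultaneous block-diagonalization for the CQ-type channels, where one must notice that $V_{\mathrm{diag}}$ splits further as $\mathbb{C}|\Omega\rangle\oplus(V_{\mathrm{diag}}\ominus\mathbb{C}|\Omega\rangle)$ before any eigenvalue can be read off; once this is in place the rest is bookkeeping.
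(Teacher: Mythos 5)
Your proof is correct, and it takes a genuinely different route from the paper's. The paper splits the work in two: sufficiency is obtained from Lemma~\ref{lem:reprs}, which exhibits each map as a conjugation sum $\sum_\alpha c_\alpha V_\alpha S V_\alpha$ with coefficients that are nonnegative exactly on the stated parameter ranges (so the representation becomes a Kraus decomposition), while necessity is obtained from Choi's theorem via explicit minors of the Choi matrix --- a determinant computed through a recurrence $D(a,b)$ for $\Phi_{dcq}$ and a $2\times 2$ principal minor for $\Phi_{tcq}$ --- and the ranges for $\Phi_d$ and $\Phi_t$ are simply cited from the literature. You instead write each Choi matrix as a combination of the commuting operators $n\left|\Omega\right\rangle\!\left\langle\Omega\right|$, the swap $F$, the diagonal projector $P$ and $I$, and read off the full spectrum block by block; I checked the block eigenvalues ($\frac{1-(n-1)^2p}{n}$, $\frac{1+(2n-1)p}{n}$, $\frac{1-p}{n}$ for $\Phi_{dcq}$ and $\frac{1+(n-1)p}{n}$, $\frac{1-(n+1)p}{n}$ for $\Phi_{tcq}$) and they are right, as are the resulting ranges. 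This buys you both directions of all four equivalences in one stroke, makes the argument self-contained (no appeal to \cite{King2003,Datta2006} for the depolarizing and transpose-depolarizing ranges), and avoids both the determinant recurrence and the representation lemma; what it does not give you is the explicit Kraus form of Lemma~\ref{lem:reprs}, which the paper wants for its own sake. One small point worth a sentence in a write-up: the paper defines a channel by requiring $\Phi^{*}$ to be completely positive, whereas you apply Choi's theorem to $\Phi$ itself; this is harmless here since these diagonal channels are self-adjoint with respect to the form $B$ (as noted in Section~\ref{ssec:diag def}), and in any case $\Phi$ is completely positive iff $\Phi^{*}$ is, but you should say so.
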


The following lemma, proved in Appendix~\ref{app:pf reprs}, is used to show that
if the conditions \eqref{eq:dep p range}--\eqref{eq:tcq p range} are satisfied,
then the maps \eqref{eq:dep def}--\eqref{eq:tcq def} are quantum channels. As
stated in the proof of Theorem~\ref{thm:p range}, the representations provided
by the lemma are in fact Kraus representations of the four families of channels.

\begin{lem} \label{lem:reprs}
The maps \eqref{eq:dep def}--\eqref{eq:tcq def} can be expressed as
\begin{equation} \label{eq:repr pauli}
\Phi\left(p, \, S\right) =
c_{0} S +
c_{x} \sum_{i=1}^{N} \sigma_{x,i} S \sigma_{x,i} +
c_{y} \sum_{i=1}^{N} \sigma_{y,i} S \sigma_{y,i} +
c_{z} \sum_{i=1}^{N} \sigma_{z,i} S \sigma_{z,i}
\end{equation}
and
\begin{equation} \label{eq:repr basis}
\Phi\left(p, \, S\right) =
\tilde{c}_{0} e_{0,1} S e_{0,1} +
\tilde{c}_{x} \sum_{i=1}^{N} e_{x,i} S e_{x,i} +
\tilde{c}_{y} \sum_{i=1}^{N} e_{y,i} S e_{y,i} +
\tilde{c}_{z} \sum_{j=1}^{n-1} e_{z,j} S e_{z,j},
\end{equation}
where $\sigma_{\alpha,i}$ are the generalized Pauli matrices, $e_{\alpha,i}$ are
the elements of the basis $\mathcal{E}$, and $c_{\alpha}$ and
$\tilde{c}_{\alpha}$ are real coefficients for every
$\alpha \in \left\{ 0, \, x, \, y, \, z \right\}$. More specifically,
\begin{itemize}
\item for the depolarizing channel $\Phi_{d}$,
\begin{equation*}
c_{0} = \frac{1+\left(n^{2}-1\right)p}{n^{2}}, \quad
c_{x} = c_{y} = \frac{1-p}{2n}, \quad
c_{z} = \frac{1-p}{n^{2}},
\end{equation*}
\begin{equation*}
\tilde{c}_{0} = \frac{1+\left(n^{2}-1\right)p}{n}, \quad
\tilde{c}_{x} = \tilde{c}_{y} = \tilde{c}_{z} = \frac{1-p}{n};
\end{equation*}
\item for the transpose-depolarizing channel $\Phi_{t}$,
\begin{equation*}
c_{0} = c_{z} = \frac{1+\left(n-1\right)p}{n^{2}}, \quad
c_{x} = \frac{1+\left(n-1\right)p}{2n}, \quad
c_{y} = \frac{1-\left(n+1\right)p}{2n},
\end{equation*}
\begin{equation*}
\tilde{c}_{0} = \tilde{c}_{x} = \tilde{c}_{z} =
  \frac{1+\left(n-1\right)p}{n}, \quad
\tilde{c}_{y} = \frac{1-\left(n+1\right)p}{n};
\end{equation*}
\item for the DCQ channel $\Phi_{dcq}$,
\begin{equation*}
c_{0} = \frac{1-\left(n-1\right)^{2}p}{n^{2}}, \quad
c_{x} = c_{y} = \frac{1-p}{2n}, \quad
c_{z} = \frac{1+\left(2n-1\right)p}{n^{2}},
\end{equation*}
\begin{equation*}
\tilde{c}_{0} = \frac{1-\left(n-1\right)^{2}p}{n}, \quad
\tilde{c}_{x} = \tilde{c}_{y} = \frac{1-p}{n}, \quad
\tilde{c}_{z} = \frac{1+\left(2n-1\right)p}{n};
\end{equation*}
\item for the TCQ channel $\Phi_{tcq}$,
\begin{equation*}
c_{0} = c_{z} = \frac{1+\left(n-1\right)p}{n^{2}}, \quad
c_{x} = \frac{1-\left(n+1\right)p}{2n}, \quad
c_{y} = \frac{1+\left(n-1\right)p}{2n},
\end{equation*}
\begin{equation*}
\tilde{c}_{0} = \tilde{c}_{y} = \tilde{c}_{z} =
  \frac{1+\left(n-1\right)p}{n}, \quad
\tilde{c}_{x} = \frac{1-\left(n+1\right)p}{n}.
\end{equation*}
\end{itemize}
\end{lem}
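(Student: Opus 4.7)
The plan is to verify \eqref{eq:repr pauli} and \eqref{eq:repr basis} by direct computation. For each of the four channels we must find real coefficients $(c_\alpha)$, $(\tilde c_\alpha)$ so that the sums on the right evaluate to the stated closed-form expressions \eqref{eq:dep def}--\eqref{eq:tcq def}. The key intermediate step is to compute the three ``twirling sums''
\begin{equation*}
T_\alpha(S) := \sum_{i=1}^{N} \sigma_{\alpha,i}\, S\, \sigma_{\alpha,i}, \qquad \alpha \in \{x, y, z\},
\end{equation*}
in closed form. Since each $\sigma_{\alpha,i}$ (for $\alpha=x,y$) is supported on a single $\{a,b\}\times\{a,b\}$ block corresponding to a pair $a<b$, one computes directly that $\sigma_{x,i} S \sigma_{x,i}$ swaps the four entries of the $\{a,b\}\times\{a,b\}$ submatrix of $S$ into their transposed positions (and sends every other entry to zero), while $\sigma_{y,i}$ does the same with an extra sign on the off-diagonal entries (the two factors of $-i,i$ coming from $\sigma_y$). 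Each $\sigma_{z,i}$ is diagonal, so $\sigma_{z,i} S \sigma_{z,i}$ coincides with $S$ on the $\{a,b\}\times\{a,b\}$ submatrix except that the two off-diagonal entries flip sign. Summing the pair contributions over all $N$ pairs yields
\begin{align*}
T_x(S) &= S^T + (\mathrm{Tr}\, S)\, I - 2\,\mathrm{diag}(S), \\
T_y(S) &= -S^T + (\mathrm{Tr}\, S)\, I, \\
T_z(S) &= n\,\mathrm{diag}(S) - S.
\end{align*}

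With these in hand, the linear combination in \eqref{eq:repr pauli} becomes
\begin{equation*}
c_0 S + c_x T_x(S) + c_y T_y(S) + c_z T_z(S)
 = (c_0 - c_z)\, S + (c_x - c_y)\, S^T + (c_x + c_y)(\mathrm{Tr}\, S)\, I + (n c_z - 2 c_x)\,\mathrm{diag}(S).
\end{equation*}
Each of $\Phi_d, \Phi_t, \Phi_{dcq}, \Phi_{tcq}$ is explicitly a linear combination of $S$, $S^T$, $(\mathrm{Tr}\, S)\, I$, and $\mathrm{diag}(S)$ (noting that $\sum_i \langle\psi_i| S|\psi_i\rangle\, |\psi_i\rangle\langle\psi_i| = \mathrm{diag}(S)$). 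Matching the four coefficients gives a nondegenerate $4\times 4$ linear system in $(c_0, c_x, c_y, c_z)$, which I would solve channel-by-channel: this is a routine algebraic verification that reproduces the four tables in the lemma.

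For the basis representation \eqref{eq:repr basis} I use $e_{x,i} = \sigma_{x,i}/\sqrt{2}$, $e_{y,i}=\sigma_{y,i}/\sqrt{2}$ and $e_{0,1}=I/\sqrt{n}$, which give $e_{0,1} S e_{0,1} = S/n$ and $\sum_i e_{x,i} S e_{x,i} = T_x(S)/2$, and analogously for $y$. The nontrivial identity is
\begin{equation*}
\sum_{j=1}^{n-1} e_{z,j}\, S\, e_{z,j} = \mathrm{diag}(S) - \tfrac{1}{n} S = \tfrac{1}{n} T_z(S).
\end{equation*}
I would prove this entrywise: a direct expansion of $(M_{z,j} S M_{z,j})_{kl}/(j(j+1))$ using the diagonal $(1,\dots,1,-j,0,\dots,0)$ of $M_{z,j}$ produces a case split at $j=k-1$, and summing over $j$ via the telescoping identity $1/(j(j+1))=1/j-1/(j+1)$ collapses the diagonal contribution to $(n-1)/n \cdot S_{kk}$ and the off-diagonal contribution to $-S_{kl}/n$. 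Once this is established, \eqref{eq:repr basis} is just a rescaling of \eqref{eq:repr pauli}, and one reads off $\tilde c_0 = n c_0$, $\tilde c_x = 2 c_x$, $\tilde c_y = 2 c_y$, $\tilde c_z = n c_z$; comparing with the solutions of the Pauli system yields the tabulated values of $\tilde c_\alpha$.

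The main obstacle is the $z$-sum for the basis $\mathcal{E}$: because the matrices $M_{z,j}$ have heterogeneous diagonals (the entry $-j$ breaks the symmetry), the entrywise computation requires careful bookkeeping of which values of $j$ contribute to a given $(k,l)$. The telescoping identity resolves this cleanly, but it is the only place where the calculation is not immediate.
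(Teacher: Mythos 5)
Your proposal is correct and follows essentially the same route as the paper: the appendix proves exactly these twirling sums (Lemma~\ref{lem:tech}) and then verifies the coefficient tables by direct computation, while you organize the same verification as matching the coefficients of $S$, $S^{T}$, $\left(\mathrm{Tr}\,S\right)I$ and $\mathrm{diag}\left(S\right)$ in a nondegenerate linear system, together with the rescalings $\tilde{c}_{0}=nc_{0}$, $\tilde{c}_{x}=2c_{x}$, $\tilde{c}_{y}=2c_{y}$, $\tilde{c}_{z}=nc_{z}$. One remark: your closed forms, with the transpose appearing in the $x$- and $y$-sums rather than the $z$-sums, are the correct ones --- the identities \eqref{eq:tec sum pauli x}--\eqref{eq:tec sum pauli z} and \eqref{eq:tec sum basis z} as stated in the paper have $S$ and $S^{T}$ interchanged (a typo, since the paper's own block products and its entrywise verification of $\Phi_{d}$ agree with your formulas, and the stated versions would not reproduce \eqref{eq:dep def}--\eqref{eq:tcq def} for non-symmetric $S$).
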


\begin{proof}[Proof of Theorem~\ref{thm:p range}]
Since \eqref{eq:dep p range} and \eqref{eq:trd p range} are well-known results
(e.g. present in \cite{King2003} and \cite{Datta2006}), we omit the respective
proofs, although one can use the method described below to prove
\eqref{eq:dep p range} and \eqref{eq:trd p range} similarly to
\eqref{eq:dcq p range} and \eqref{eq:tcq p range}.

First, we prove sufficiency of the conditions \eqref{eq:dcq p range} and
\eqref{eq:tcq p range} using Lemma~\ref{lem:reprs} and Kraus' theorem. Suppose
that the condition \eqref{eq:dcq p range} (respectively, \eqref{eq:tcq p range})
is satisfied. Then all coefficients in representations \eqref{eq:repr pauli} and
\eqref{eq:repr basis} are nonnegative. Therefore, \eqref{eq:repr pauli} and
\eqref{eq:repr basis} are Kraus representations, hence $\Phi_{dcq}$
(respectively, $\Phi_{tcq}$) is a quantum channel.

Second, we prove necessity of the conditions \eqref{eq:dcq p range} and
\eqref{eq:tcq p range} using Choi's theorem.

\begin{itemize}

\item Suppose that the map $\Phi_{dcq}$ is a quantum channel. Then
$\Phi_{dcq}^{*}$ is a completely positive map and the Choi matrix
\begin{equation*}\scalebox{.75}{$
C = \left(\begin{array}{cccc|cccc|c|cccc}
p + \frac{1-p}{n} & 0 & \dots & 0 & 0 & -p & \dots & 0 & \dots & 0 & 0 & \dots & -p \\
0 & \frac{1-p}{n} & \dots & 0 & 0 & 0 & \dots & 0 & \dots & 0 & 0 & \dots & 0 \\
\dots & \dots & \dots & \dots & \dots & \dots & \dots & \dots & \dots & \dots & \dots & \dots & \dots \\
0 & 0 & \dots & \frac{1-p}{n} & 0 & 0 & \dots & 0 & \dots & 0 & 0 & \dots & 0 \\
\hline
0 & 0 & \dots & 0 & \frac{1-p}{n} & 0 & \dots & 0 & \dots & 0 & 0 & \dots & 0 \\
-p & 0 & \dots & 0 & 0 & p + \frac{1-p}{n} & \dots & 0 & \dots & 0 & 0 & \dots & -p \\
\dots & \dots & \dots & \dots & \dots & \dots & \dots & \dots & \dots & \dots & \dots & \dots & \dots \\
0 & 0 & \dots & 0 & 0 & 0 & \dots & \frac{1-p}{n} & \dots & 0 & 0 & \dots & 0 \\
\hline
\dots & \dots & \dots & \dots & \dots & \dots & \dots & \dots & \dots & \dots & \dots & \dots & \dots \\
\hline
0 & 0 & \dots & 0 & 0 & 0 & \dots & 0 & \dots & \frac{1-p}{n} & \dots & 0 & 0 \\
0 & 0 & \dots & 0 & 0 & 0 & \dots & 0 & \dots & \dots & \dots & \dots & \dots \\
\dots & \dots & \dots & \dots & \dots & \dots & \dots & \dots & \dots & 0 & \dots & \frac{1-p}{n} & 0 \\
-p & 0 & \dots & 0 & 0 & -p & \dots & 0 & \dots & 0 & \dots & 0 & p + \frac{1-p}{n}
\end{array}\right)
$}\end{equation*}
is positive semidefinite by Choi's theorem. In particular, the element $C_{2,2}$
is nonnegative: $\frac{1-p}{n} \ge 0$, hence the parameter $p$ satisfies
$p \le 1$. It remains to show that $p \ge -\frac{1}{n^{2}-1}$. Since $C$ is
positive semidefinite, we have $\det C \ge 0$. Expanding the determinant using
the elements of the main diagonal, we get:
\begin{equation*}
\det C = \left(\frac{1-p}{n}\right)^{n\left(n-1\right)} \cdot
\det \left(\begin{array}{cccc}
p + \frac{1-p}{n} & -p                & \dots & -p \\
-p                & p + \frac{1-p}{n} & \dots & -p \\
\dots             & \dots             & \dots & \dots \\
-p                & -p                & \dots & p + \frac{1-p}{n}
\end{array}\right).
\end{equation*}
To compute the determinant on the right hand side, consider
\begin{equation*}\scalebox{.9}{$
D\left(a, \, b\right) = \det \left(\underbrace{\begin{array}{cccc}
p + \frac{1-p}{n} & -p                & \dots & -p \\
-p                & p + \frac{1-p}{n} & \dots & -p \\
\dots             & \dots             & \dots & \dots \\
-p                & -p                & \dots & p + \frac{1-p}{n} \\
-p                & -p                & \dots & -p \\
-p                & -p                & \dots & -p \\
\dots             & \dots             & \dots & \dots \\
-p                & -p                & \dots & -p
\end{array}}_{a}\underbrace{\begin{array}{cccc}
-p    & -p    & \dots & -p \\
-p    & -p    & \dots & -p \\
\dots & \dots & \dots & \dots \\
-p    & -p    & \dots & -p \\
-p    & -p    & \dots & -p \\
-p    & -p    & \dots & -p \\
\dots & \dots & \dots & \dots \\
-p    & -p    & \dots & -p
\end{array}}_{b}\right).
$}\end{equation*}
Using the multilinearity property of the determinant, we obtain the following
recurrence relation:
\begin{equation*}
D\left(a, \, b\right) = D\left(a - 1, \, b + 1\right) +
    \left(2p + \frac{1-p}{n}\right) D\left(a - 1, \, b\right).
\end{equation*}
Solving this recurrence relation, we find
\begin{equation*}
D\left(n, \, 0\right) = \left(2p + \frac{1-p}{n}\right)^{n-1} \cdot
    \frac{1 - \left(n-1\right)^{2}p}{n}.
\end{equation*}
Since
\begin{equation*}
\det C =
\left(\frac{1-p}{n}\right)^{n\left(n-1\right)} \cdot D\left(n, \, 0\right),
\end{equation*}
the inequality $\det C \ge 0$ holds if and only if
\begin{equation} \label{eq:dcq cpos ineq 1}
\left(\frac{1+\left(2n-1\right)p}{n}\right)^{n-1} \cdot
\frac{1-\left(n-1\right)^{2}p}{n} \ge 0.
\end{equation}
A similar calculation for the submatrix
$\left(C_{i,j}\right)_{i,j=1}^{n\left(n - 1\right)}$ yields
\begin{equation} \label{eq:dcq cpos ineq 2}
\left(\frac{1+\left(2n-1\right)p}{n}\right)^{n-2} \cdot
\frac{1-\left(n-1\right)^{2}p}{n} \ge 0.
\end{equation}
The inequalities \eqref{eq:dcq cpos ineq 1} and \eqref{eq:dcq cpos ineq 2} hold
simultaneously if and only if the following conditions are satisfied:
\begin{equation*}
\frac{1+\left(2n-1\right)p}{n} \ge 0, \quad
\frac{1-\left(n-1\right)^{2}p}{n} \ge 0,
\end{equation*}
which is equivalent to the condition \eqref{eq:dep p range}.

\item Suppose that the map $\Phi_{tcq}$ is a quantum
channel. Then $\Phi_{tcq}^{*}$ is a completely positive map. By Choi's theorem,
the Choi matrix
\begin{equation*}\scalebox{.75}{$
C = \left(\begin{array}{cccc|cccc|c|cccc}
p + \frac{1-p}{n} & 0 & \dots & 0 & 0 & 0 & \dots & 0 & \dots & 0 & 0 & \dots & 0 \\
0 & \frac{1-p}{n} & \dots & 0 & -p & 0 & \dots & 0 & \dots & 0 & 0 & \dots & 0 \\
\dots & \dots & \dots & \dots & \dots & \dots & \dots & \dots & \dots & \dots & \dots & \dots & \dots \\
0 & 0 & \dots & \frac{1-p}{n} & 0 & 0 & \dots & 0 & \dots & -p & 0 & \dots & 0 \\
\hline
0 & -p & \dots & 0 & \frac{1-p}{n} & 0 & \dots & 0 & \dots & 0 & 0 & \dots & 0 \\
0 & 0 & \dots & 0 & 0 & p + \frac{1-p}{n} & \dots & 0 & \dots & 0 & 0 & \dots & 0 \\
\dots & \dots & \dots & \dots & \dots & \dots & \dots & \dots & \dots & \dots & \dots & \dots & \dots \\
0 & 0 & \dots & 0 & 0 & 0 & \dots & \frac{1-p}{n} & \dots & 0 & -p & \dots & 0 \\
\hline
\dots & \dots & \dots & \dots & \dots & \dots & \dots & \dots & \dots & \dots & \dots & \dots & \dots \\
\hline
0 & 0 & \dots & -p & 0 & 0 & \dots & 0 & \dots & \frac{1-p}{n} & \dots & 0 & 0 \\
0 & 0 & \dots & 0 & 0 & 0 & \dots & -p & \dots & \dots & \dots & \dots & \dots \\
\dots & \dots & \dots & \dots & \dots & \dots & \dots & \dots & \dots & 0 & \dots & \frac{1-p}{n} & 0 \\
0 & 0 & \dots & 0 & 0 & 0 & \dots & 0 & \dots & 0 & \dots & 0 & p + \frac{1-p}{n}
\end{array}\right)
$}\end{equation*}
is positive semidefinite. In particular,
\begin{equation*}
\det \left(\begin{array}{cc} C_{2,2} & C_{2,n+1} \\
                             C_{n+1,2} & C_{n+1,n+1} \end{array}\right) =
\left(\frac{1}{n^{2}}-1\right)p^{2} - \frac{2}{n^{2}}p + \frac{1}{n^{2}} \ge 0.
\end{equation*}
Therefore,
\begin{equation*}
\left(1 + \left(n - 1\right)p\right) \left(1 - \left(n + 1\right)p\right) \ge 0,
\end{equation*}
which holds if and only if the parameter $p$ satisfies the condition
\eqref{eq:tcq p range}.
\end{itemize}
\end{proof}

\subsection{\label{ssec:gen ineq}Inequivalence in dimension $n \ge 3$}

In this section, we prove the following theorem.

\begin{thm} \label{thm:ineqiv}
In dimension $n \ge 3$, the one-parameter families of channels $\Phi_{d}$,
$\Phi_{t}$, $\Phi_{dcq}$, and $\Phi_{tcq}$ are not equivalent (unlike in
dimension $n = 2$, as shown in Section~\ref{sec:dim 2}).
\end{thm}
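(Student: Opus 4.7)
The plan is to introduce two invariants of the equivalence relation $\sim$ and use them pairwise across the four families. First, a direct computation from the definition shows that if $\Phi_1\left(S\right) = U_2 \Phi_2\left(U_1 S U_1^*\right) U_2^*$, then the Choi matrices satisfy
\begin{equation*}
C_{\Phi_1} = \left(U_1^T \otimes U_2\right) C_{\Phi_2} \left(U_1^T \otimes U_2\right)^*,
\end{equation*}
so (A) the spectrum of the Choi matrix (counted with multiplicity) is an invariant of $\sim$. Second, since $U_1 \left| \psi \right\rangle \left\langle \psi \right| U_1^*$ is again a pure state and $U_2\left(\cdot\right)U_2^*$ preserves eigenvalues, (B) the set of pure-output spectra
\begin{equation*}
\Sigma\left(\Phi\right) := \left\{\mathrm{spec}\,\Phi\left(\left| \psi \right\rangle \left\langle \psi \right|\right) \, : \, \left\langle \psi \middle| \psi \right\rangle = 1\right\}
\end{equation*}
is also an invariant.

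Next I compute the Choi spectra. Using $\sum_{i,j} E_{ij} \otimes E_{ij} = n \left| \Omega \right\rangle \left\langle \Omega \right|$ with $\left| \Omega \right\rangle = n^{-1/2} \sum_i \left| i \right\rangle \left| i \right\rangle$, and $\sum_{i,j} E_{ij} \otimes E_{ji} = \mathrm{SWAP}$, one obtains that $C_{\Phi_d\left(p\right)}$ has eigenvalues $\frac{1+\left(n^2-1\right)p}{n}$ and $\frac{1-p}{n}$ with multiplicities $1$ and $n^2-1$, while $C_{\Phi_t\left(p\right)}$ has eigenvalues $\frac{1+\left(n-1\right)p}{n}$ and $\frac{1-\left(n+1\right)p}{n}$ with multiplicities $\frac{n\left(n+1\right)}{2}$ and $\frac{n\left(n-1\right)}{2}$. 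Decomposing $\mathbb{C}^{n^2}$ along $\mathrm{span}\left\{\left| i \right\rangle \left| i \right\rangle\right\}$ and each pair $\mathrm{span}\left\{\left| i \right\rangle \left| j \right\rangle, \left| j \right\rangle \left| i \right\rangle\right\}$ shows that $C_{\Phi_{tcq}\left(p\right)}$ has exactly the same spectrum as $C_{\Phi_t\left(p\right)}$, whereas $C_{\Phi_{dcq}\left(p\right)}$ restricted to $\mathrm{span}\left\{\left| i \right\rangle \left| i \right\rangle\right\}$ equals $\frac{1+\left(2n-1\right)p}{n} I - p J$ (with $J$ the all-ones matrix) and is the scalar $\frac{1-p}{n}I$ on the complement, yielding three eigenvalues $\frac{1-\left(n-1\right)^2 p}{n}$, $\frac{1+\left(2n-1\right)p}{n}$, $\frac{1-p}{n}$ with multiplicities $1$, $n-1$, $n^2-n$, pairwise distinct for $p \neq 0$ and $n \geq 3$. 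For $p \neq 0$ and $n \geq 3$, the three multiplicity patterns $\left\{1, n^2-1\right\}$, $\left\{\frac{n\left(n+1\right)}{2}, \frac{n\left(n-1\right)}{2}\right\}$, and $\left\{1, n-1, n^2-n\right\}$ are pairwise distinct, which by (A) rules out equivalence between every pair of families except $\left(\Phi_t, \Phi_{tcq}\right)$. The case $p = 0$ is harmless: all four families collapse to the completely depolarizing channel there, but family equivalence requires a matching $\tilde p$ for every $p$ in the domain.

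It remains to separate $\Phi_t$ from $\Phi_{tcq}$ via (B). For $\Phi_t$ the output $\Phi_t\left(p, \left| \psi \right\rangle \left\langle \psi \right|\right) = p \left| \bar\psi \right\rangle \left\langle \bar\psi \right| + \frac{1-p}{n} I$ has spectrum $\left\{\frac{1+\left(n-1\right)p}{n}, \frac{1-p}{n}, \dots, \frac{1-p}{n}\right\}$ for every pure input, so $\Sigma\left(\Phi_t\left(p, \cdot\right)\right)$ is a singleton. For $\Phi_{tcq}\left(\tilde p, \cdot\right)$ I evaluate on two pure inputs: on $\left| \psi \right\rangle = \left| 1 \right\rangle$ the output is $\tilde p E_{11} + \frac{1-\tilde p}{n} I$ with spectrum $\left\{\frac{1+\left(n-1\right)\tilde p}{n}, \frac{1-\tilde p}{n}, \dots\right\}$, while on $\left| \psi \right\rangle = n^{-1/2} \sum_i \left| i \right\rangle$, using $\sum_i \left|\left\langle i \middle| \psi \right\rangle\right|^2 \left| i \right\rangle \left\langle i \right| = n^{-1}I$ and $\left| \bar\psi \right\rangle = \left| \psi \right\rangle$, the output equals $-\tilde p \left| \psi \right\rangle \left\langle \psi \right| + \frac{1+\tilde p}{n} I$ with spectrum $\left\{\frac{1-\left(n-1\right)\tilde p}{n}, \frac{1+\tilde p}{n}, \dots\right\}$. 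For $n \geq 3$ these two multisets coincide only at $\tilde p = 0$, so $\Sigma\left(\Phi_{tcq}\left(\tilde p, \cdot\right)\right)$ contains more than one element whenever $\tilde p \neq 0$. Combined with (A), which forces $\tilde p \neq 0$ once $p \neq 0$, this yields the desired contradiction and completes the separation.

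The main obstacle is precisely the pair $\left(\Phi_t, \Phi_{tcq}\right)$: their Choi spectra agree identically in $p$, so a purely spectral argument at the Choi level is blind to their difference, and one is forced to exploit a finer invariant such as (B), which detects that $\Phi_t$ is unitarily covariant on pure inputs while $\Phi_{tcq}$ singles out the computational basis.
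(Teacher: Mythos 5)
Your proof is correct, and it takes a genuinely different route from the paper. The paper never touches Choi spectra in this argument: it first shows (its Lemma~\ref{lem:ch eigenvals}) that $\Phi_{d}$ and $\Phi_{t}$ send isospectral inputs to isospectral outputs while $\Phi_{dcq}$ and $\Phi_{tcq}$ do not (exhibiting the same two pure states $\left|1\right\rangle$ and $n^{-1/2}\sum_i\left|i\right\rangle$ that you use), which kills the four ``mixed'' pairs; it then separates $\Phi_{d}$ from $\Phi_{t}$ and $\Phi_{dcq}$ from $\Phi_{tcq}$ by a rescaling lemma ($\Phi(p,\cdot)\sim\Phi(\tilde p,\cdot)$ implies $\Phi(\alpha p,\cdot)\sim\Phi(\alpha\tilde p,\cdot)$) combined with matching the admissible parameter intervals of Theorem~\ref{thm:p range}, which forces $n\in\{0,-2\}$ or $n\in\{0,2\}$ etc. Your invariant (A), unitary similarity of Choi matrices via $C_{\Phi_1}=\left(U_1^{T}\otimes U_2\right)C_{\Phi_2}\left(U_1^{T}\otimes U_2\right)^{*}$, is correct, and your spectra are right: $\{1,n^2-1\}$ for $\Phi_d$, $\{\tfrac{n(n+1)}{2},\tfrac{n(n-1)}{2}\}$ for both $\Phi_t$ and $\Phi_{tcq}$, and $\{1,n-1,n^2-n\}$ for $\Phi_{dcq}$ (consistent with the determinant factorization in the paper's proof of Theorem~\ref{thm:p range}), so (A) settles every pair except $\left(\Phi_t,\Phi_{tcq}\right)$, for which your invariant (B) — constancy versus non-constancy of the pure-output spectrum — is essentially the paper's Lemma~\ref{lem:ch eigenvals} in a different guise (your output spectra for $\Phi_{tcq}$ on the two test states are the correct ones). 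What your route buys: it is purely linear-algebraic, gives fixed-parameter inequivalence for every $p\neq0$ directly, and avoids the paper's rescaling/range-matching step, which is the more delicate part of the published argument (it needs that the sets of admissible $\alpha$ coincide). What the paper's route buys: it reuses the already-established parameter ranges and requires no Choi-spectrum computation, and the interval comparison transparently exhibits why $n=2$ is the exceptional dimension. Both handle the $p=0$ degeneration correctly; your remark that family equivalence only requires exhibiting one bad $p\neq0$ is the right disposal of that case.
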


To prove this theorem, we use the following two lemmas.

\begin{lem} \label{lem:ch eigenvals}
Let $S_{1}$ and $S_{2}$ be two states with the same set of eigenvalues. Then
$\Phi_{d}\left(p,\,S_{1}\right)$ and $\Phi_{d}\left(p,\,S_{2}\right)$,
$\Phi_{t}\left(p,\,S_{1}\right)$ and $\Phi_{t}\left(p,\,S_{2}\right)$ have the
same set of eigenvalues;
$\Phi_{dcq}\left(p,\,S_{1}\right)$ and $\Phi_{dcq}\left(p,\,S_{2}\right)$,
$\Phi_{tcq}\left(p,\,S_{1}\right)$ and $\Phi_{tcq}\left(p,\,S_{2}\right)$ may
have different sets of eigenvalues, unless $p = 0$ or $n = 2$.
\end{lem}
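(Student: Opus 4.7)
The plan is to split the four claims into two easy ones (for $\Phi_{d}$ and $\Phi_{t}$) proved by unitary covariance, and two harder ones (for $\Phi_{dcq}$ and $\Phi_{tcq}$) proved by an explicit counterexample. For the first pair I would exploit the fact that $S_{1}$ and $S_{2}$ having the same spectrum means $S_{2} = U S_{1} U^{*}$ for some unitary $U$. The depolarizing channel is manifestly unitarily covariant: $\Phi_{d}(p, U S_{1} U^{*}) = p U S_{1} U^{*} + \frac{1-p}{n}\mathrm{Tr}(S_{1}) I = U \Phi_{d}(p, S_{1}) U^{*}$. For the transpose-depolarizing channel I would use $(U S_{1} U^{*})^{T} = \bar{U} S_{1}^{T} \bar{U}^{*}$ together with the unitarity of $\bar{U}$ to get $\Phi_{t}(p, S_{2}) = \bar{U} \Phi_{t}(p, S_{1}) \bar{U}^{*}$. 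Both unitary similarities preserve eigenvalues, so the first two assertions follow.

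For $\Phi_{dcq}$ and $\Phi_{tcq}$, the extra term $2p\sum_{i}\langle\psi_{i}|S|\psi_{i}\rangle|\psi_{i}\rangle\langle\psi_{i}|$ is the dephasing map that zeros out the off-diagonal entries of $S$ in the preferred basis; it is not unitarily covariant, so I expect states that differ only in their orientation relative to that basis to produce different output spectra. I would take the two rank-one pure states $S_{1} = |\psi_{1}\rangle\langle\psi_{1}|$ and $S_{2} = |\phi\rangle\langle\phi|$ with $|\phi\rangle = \tfrac{1}{\sqrt{n}}\sum_{i=1}^{n}|\psi_{i}\rangle$; they trivially share the spectrum $(1, 0, \ldots, 0)$, yet $S_{1}$ is already diagonal in $\{|\psi_{i}\rangle\}$ while $S_{2}$ has every entry equal to $1/n$.

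Plugging in: since $S_{1}$ is diagonal, $\Phi_{dcq}(p, S_{1}) = pS_{1} + \frac{1-p}{n}I$, with spectrum $\bigl\{\tfrac{1+(n-1)p}{n}, \tfrac{1-p}{n}, \ldots, \tfrac{1-p}{n}\bigr\}$; and since the diagonal of $S_{2}$ equals $\tfrac{1}{n}I$ while $S_{2} = \tfrac{1}{n}J$ (the all-ones matrix), $\Phi_{dcq}(p, S_{2}) = -\tfrac{p}{n}J + \tfrac{1+p}{n}I$, with spectrum $\bigl\{\tfrac{1-(n-1)p}{n}, \tfrac{1+p}{n}, \ldots, \tfrac{1+p}{n}\bigr\}$ (from the eigenvalues $n$ and $0$ of $J$). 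Comparing the two multisets element by element forces $p = 0$ or $n = 2$, which is the claimed exclusion. The same pair works verbatim for $\Phi_{tcq}$: both $S_{1}$ and $S_{2}$ are real symmetric, so $S_{i}^{T} = S_{i}$ and $\Phi_{tcq}$ produces identical output to $\Phi_{dcq}$ on them.

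The main obstacle I anticipate is nothing conceptual but rather the design of the counterexample: one needs a single pair of isospectral states that simultaneously breaks the invariance for both hybrid channels and degenerates in the excluded regime $p=0$ or $n=2$. The choice above works because the pair is stable under transpose (covering $\Phi_{tcq}$), one member is diagonal while the other has maximally spread diagonal (creating the largest possible mismatch under the dephasing term), and the two output spectra visibly collapse exactly when $p(n-2) = 0$.
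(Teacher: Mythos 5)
Your proposal is correct and follows essentially the same route as the paper: the first two claims reduce to the observation that $\Phi_{d}$ and $\Phi_{t}$ only shift/permute the spectrum (the paper states the output eigenvalues $p\lambda_{k}+\frac{1-p}{n}$ directly, you phrase it via unitary covariance, which is the same content), and for $\Phi_{dcq}$, $\Phi_{tcq}$ you use exactly the paper's counterexample pair $S_{1}=\left|\psi_{1}\right\rangle\left\langle\psi_{1}\right|$ and $S_{2}=\frac{1}{n}J$, with the transpose-invariance of these real symmetric states covering the TCQ case. As a side note, your computed spectrum of $\Phi_{dcq}\left(p,\,S_{2}\right)$, namely $\bigl\{\frac{1-\left(n-1\right)p}{n},\,\frac{1+p}{n},\,\dots,\,\frac{1+p}{n}\bigr\}$, is the correct one (the eigenvalues printed in the paper for this state do not sum to $1$), and the comparison with the spectrum of $\Phi_{dcq}\left(p,\,S_{1}\right)$ still yields the stated exclusion $p=0$ or $n=2$.
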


\begin{proof}
Denote $\left\{\lambda_{k}\right\}_{k=1}^{n}$ the eigenvalues of the states
$S_{1}$ and $S_{2}$. From the definitions of the depolarizing channel and the
transpose-depolarizing channel, we can see that the output states
$\Phi_{d}\left(p,\,S_{1}\right)$, $\Phi_{d}\left(p,\,S_{2}\right)$,
$\Phi_{t}\left(p,\,S_{1}\right)$ and $\Phi_{t}\left(p,\,S_{2}\right)$ all have
the same eigenvalues $\left\{ p\lambda_{k} + \frac{1-p}{n}\right\} _{k=1}^{n}$.

For the DCQ and the TCQ channels, consider two states $S_{1}$ and $S_{2}$
defined as follows:
\begin{equation*}
\left| \eta \right\rangle = \left(\begin{array}{c}
        1 \\ 0 \\ \dots \\ 0 \end{array}\right), \quad
S_{1} = \left| \eta \right\rangle \left\langle \eta \right|
      = \left(\begin{array}{cccc}
        1     & 0     & \dots & 0 \\
        0     & 0     & \dots & 0 \\
        \dots & \dots & \dots & \dots \\
        0     & 0     & \dots & 0
        \end{array}\right),
\end{equation*}
\begin{equation*}
\left| \zeta \right\rangle = \frac{1}{\sqrt{n}}\left(\begin{array}{c}
        1 \\ 1 \\ \dots \\ 1 \end{array}\right), \quad
S_{2} = \left| \zeta \right\rangle \left\langle \zeta \right|
      = \frac{1}{n}\left(\begin{array}{cccc}
        1     & 1     & \dots & 1 \\
        1     & 1     & \dots & 1 \\
        \dots & \dots & \dots & \dots \\
        1     & 1     & \dots & 1
        \end{array}\right).
\end{equation*}
Since $S_{1}$ and $S_{2}$ are pure states, they have eigenvalues
$\lambda_{1} = 1$ and $\lambda_{2} = \dots = \lambda_{n} = 0$. However,
$\Phi_{dcq}\left(p,\,S_{1}\right)$ and $\Phi_{tcq}\left(p,\,S_{1}\right)$ have
eigenvalues $\lambda_{1} = p + \frac{1-p}{n}$ and
$\lambda_{2} = \dots = \lambda_{n} = \frac{1-p}{n}$, while
$\Phi_{dcq}\left(p,\,S_{2}\right)$ and $\Phi_{tcq}\left(p,\,S_{2}\right)$ have
eigenvalues $\lambda_{1} = -p + \frac{1-p}{n}$ and
$\lambda_{2} = \dots = \lambda_{n} = -p + \frac{1+p}{n}$.
These sets of eigenvalues are the same only in two cases: if $p = 0$, which is
trivial, or if $n = 2$, which we already considered in Section~\ref{sec:dim 2}.
\end{proof}

\begin{lem} \label{lem:equiv mul alpha}
Suppose that for some $p$ and $\tilde{p}$ the channels
$\Phi_{d}\left(p,\,S\right)$ and $\Phi_{t}\left(\tilde{p},\,S\right)$ are
equivalent. Let $\alpha\in\mathbb{R}$ be a multiplier such that $\alpha p$ and
$\alpha\tilde{p}$ are in the valid parameter range for $\Phi_{d}$ and $\Phi_{t}$
respectively. Then the channels $\Phi_{d}\left(\alpha p,\,S\right)$ and
$\Phi_{t}\left(\alpha\tilde{p},\,S\right)$ are equivalent.

The same holds for the DCQ and the TCQ channels: if for some $p$ and $\tilde{p}$
we have $\Phi_{dcq}\left(p,\,S\right) \sim
\Phi_{tcq}\left(\tilde{p},\,S\right)$, then $\Phi_{dcq}\left(\alpha p,\,S\right)
\sim \Phi_{tcq}\left(\alpha\tilde{p},\,S\right)$ for every $\alpha\in\mathbb{R}$
such that $\alpha p$ and $\alpha\tilde{p}$ are in the valid parameter range for
$\Phi_{dcq}$ and $\Phi_{tcq}$ respectively.
\end{lem}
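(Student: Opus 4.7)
The strategy is to exploit a common affine structure of the four channel families: each $\Phi \in \{\Phi_{d}, \Phi_{t}, \Phi_{dcq}, \Phi_{tcq}\}$ lies, as a function of its parameter $p$, on a line in channel space passing through the completely depolarizing channel $\Phi_{0}(S) = \frac{1}{n}\mathrm{Tr}(S)\,I$, and $\Phi_{0}$ is invariant under every equivalence transformation. Concretely, I would first verify the identity
\[
\Phi(\alpha p, S) = \alpha\,\Phi(p, S) + (1-\alpha)\,\Phi_{0}(S)
\]
by direct substitution into \eqref{eq:dep def}--\eqref{eq:tcq def}. This is a one-line check: the terms $\pm p\,S$, $\pm p\,S^{T}$, and $2p\sum_{i}\langle\psi_{i}|S|\psi_{i}\rangle|\psi_{i}\rangle\langle\psi_{i}|$ are all linear in $p$, while the scalar part rearranges as $\frac{1-\alpha p}{n}\mathrm{Tr}(S)\,I = \alpha\cdot\frac{1-p}{n}\mathrm{Tr}(S)\,I + (1-\alpha)\cdot\frac{1}{n}\mathrm{Tr}(S)\,I$.

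Second, I would note that $\Phi_{0}$ is equivalence-invariant: for any unitaries $U_{1}, U_{2}$,
\[
U_{2}\Phi_{0}(U_{1}SU_{1}^{*})U_{2}^{*} = \frac{\mathrm{Tr}(S)}{n}\,U_{2}IU_{2}^{*} = \Phi_{0}(S),
\]
since the trace is invariant under $U_{1}$-conjugation and $U_{2}IU_{2}^{*} = I$.

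With these two ingredients, the lemma follows by a short computation. Suppose $\Phi_{d}(p, S) = U_{2}\Phi_{t}(\tilde{p}, U_{1}SU_{1}^{*})U_{2}^{*}$ for every $S$. Applying the affine decomposition on both sides, together with linearity of unitary conjugation and the invariance of $\Phi_{0}$, I would obtain
\begin{align*}
\Phi_{d}(\alpha p, S)
 & = \alpha\,\Phi_{d}(p, S) + (1-\alpha)\,\Phi_{0}(S) \\
 & = \alpha\,U_{2}\Phi_{t}(\tilde{p}, U_{1}SU_{1}^{*})U_{2}^{*}
   + (1-\alpha)\,U_{2}\Phi_{0}(U_{1}SU_{1}^{*})U_{2}^{*} \\
 & = U_{2}\bigl[\alpha\,\Phi_{t}(\tilde{p}, U_{1}SU_{1}^{*})
   + (1-\alpha)\,\Phi_{0}(U_{1}SU_{1}^{*})\bigr]U_{2}^{*} \\
 & = U_{2}\Phi_{t}(\alpha\tilde{p}, U_{1}SU_{1}^{*})U_{2}^{*},
\end{align*}
which is the desired equivalence $\Phi_{d}(\alpha p)\sim\Phi_{t}(\alpha\tilde{p})$, witnessed by the same pair of unitaries. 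The DCQ/TCQ case is identical, since the affine decomposition above holds verbatim for $\Phi_{dcq}$ and $\Phi_{tcq}$. There is no real obstacle here; the only thing to verify beyond the one-line affine relation is that $\alpha p$ and $\alpha\tilde{p}$ remain within the valid parameter ranges \eqref{eq:dep p range}--\eqref{eq:tcq p range}, which is precisely the hypothesis on $\alpha$.
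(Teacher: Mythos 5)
Your proposal is correct and follows essentially the same route as the paper: the paper likewise uses the affine identity $\Phi(\alpha p, S) = \alpha\,\Phi(p,S) + \frac{1-\alpha}{n}\mathrm{Tr}\,S$ and pulls the constant part through the conjugation by $U_{2}$, exactly as you do by isolating the completely depolarizing channel $\Phi_{0}$ and noting its equivalence-invariance. Your explicit remark that the classical-quantum term is linear in $p$ (so the same decomposition holds for $\Phi_{dcq}$ and $\Phi_{tcq}$) just spells out what the paper leaves to the reader.
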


\begin{proof}
Suppose that for some $p$ and $\tilde{p}$ we have
$\Phi_{d}\left(p,\,S\right) \sim \Phi_{t}\left(\tilde{p},\,S\right)$. Then
there exist unitary operators $U_{1}$ and $U_{2}$ such that for every quantum
state $S$ the following identity holds:
\begin{equation*}
\Phi_{d}\left(p, \, S\right)
 = U_{2}\Phi_{t}\left(\tilde{p}, \, U_{1} S U_{1}^{*}\right)U_{2}^{*}.
\end{equation*}
Let $\alpha\in\mathbb{R}$ be a multiplier such that $\alpha p$ and
$\alpha\tilde{p}$ are in the valid parameter range for $\Phi_{d}$ and $\Phi_{t}$
respectively. Note that
\begin{equation*}
\Phi_{d}\left(\alpha p, \, S\right)
 = \alpha\Phi_{d}\left(p, \, S\right)
 + \frac{1-\alpha}{n}\mathrm{Tr} \, S,
\end{equation*}
\begin{equation*}
\Phi_{t}\left(\alpha p, \, S\right)
 = \alpha\Phi_{t}\left(p, \, S\right)
 + \frac{1-\alpha}{n}\mathrm{Tr} \, S.
\end{equation*}
Therefore,
\begin{align*}
\Phi_{d}\left(\alpha p, \, S\right)
 & = \alpha\Phi_{d}\left(p, \, S\right)
   + \frac{1-\alpha}{n}\mathrm{Tr} \, S
   = \alpha U_{2}\Phi_{t}\left(\tilde{p}, \, U_{1} S U_{1}^{*}\right)U_{2}^{*}
   + \frac{1-\alpha}{n}\mathrm{Tr} \, S = \\
 & = U_{2}\left\{ \alpha\Phi_{t}\left(\tilde{p}, \, U_{1} S U_{1}^{*}\right)
   + \frac{1-\alpha}{n}\mathrm{Tr} \, S\right\} U_{2}^{*}
   = U_{2}\Phi_{t}\left(\alpha\tilde{p}, \, U_{1} S U_{1}^{*}\right)U_{2}^{*}.
\end{align*}
Thus, equivalence $\Phi_{d}\left(\alpha p,\,S\right) \sim
\Phi_{t}\left(\alpha\tilde{p},\,S\right)$ holds. We get the proof for the DCQ
and the TCQ channels by replacing $\Phi_{d}$ with $\Phi_{dcq}$ and $\Phi_{t}$
with $\Phi_{tcq}$.
\end{proof}

\begin{proof}[Proof of Theorem~\ref{thm:ineqiv}]
Since unitary evolutions do not change the set of eigenvalues,
the following inequivalences hold by Lemma~\ref{lem:ch eigenvals}
(for $n \ge 3$):
$\Phi_{d} \not \sim \Phi_{dcq}$, $\Phi_{d} \not \sim \Phi_{tcq}$,
$\Phi_{t} \not \sim \Phi_{dcq}$, $\Phi_{t} \not \sim \Phi_{tcq}$. Thus it
remains to prove that $\Phi_{d} \not \sim \Phi_{t}$ and
$\Phi_{dcq} \not \sim \Phi_{tcq}$ in any dimension $n \ge 3$.

Suppose that $\Phi_{d} \sim \Phi_{t}$. Note that the case when $p = 0$ and
$\tilde{p} \neq 0$ is impossible, since a completely depolarizing channel cannot
be equivalent to a channel that is not a constant mapping. Similarly, the case
when $p \neq 0$ and $\tilde{p} = 0$ is also impossible. Therefore, without loss
of generality we may assume that for some $p \neq 0$ and $\tilde{p} \neq 0$ we
have $\Phi_{d}\left(p,\,S\right) \sim \Phi_{t}\left(\tilde{p},\,S\right)$. Let
$\alpha\in\mathbb{R}$ be a multiplier such that $\alpha p$ and $\alpha\tilde{p}$
are in the valid parameter range for $\Phi_{d}$ and $\Phi_{t}$ respectively. By
Lemma~\ref{lem:equiv mul alpha}, the channels
$\Phi_{d}\left(\alpha p,\,S\right)$ and
$\Phi_{t}\left(\alpha\tilde{p},\,S\right)$ are equivalent for suitable values of
$\alpha$. On the one hand, $\Phi_{d}\left(\alpha p,\,S\right)$ is required to be
a channel, so $\alpha$ lies between $-\frac{1}{\left(n^{2}-1\right)p}$ and
$\frac{1}{p}$. On the other hand, $\Phi_{t}\left(\alpha p,\,S\right)$ is
required to be a channel, so $\alpha$ lies between
$-\frac{1}{\left(n-1\right)\tilde{p}}$ and
$\frac{1}{\left(n+1\right)\tilde{p}}$. Since, a unitary evolution of a quantum
channel is a quantum channel, the upper and the lower bound for $\alpha$ we
obtained in two ways are the same. Now, consider the following two cases.
\begin{enumerate}
\item The parameters $p$ and $\tilde{p}$ are of the same sign. The equality of
the two upper and the two lower bound for $\alpha$ yeilds the following system:
\begin{equation*}
\begin{cases}
-\frac{1}{\left(n^{2}-1\right)p} = -\frac{1}{\left(n-1\right)\tilde{p}}, \\
\frac{1}{p} = \frac{1}{\left(n+1\right)\tilde{p}}.
\end{cases}
\end{equation*}
These equalities hold only if either $n = 0$ or $n = -2$, which is impossible.
\item The parameters $p$ and $\tilde{p}$ are of different signs. Then
\begin{equation*}
\begin{cases}
-\frac{1}{\left(n^{2}-1\right)p} = \frac{1}{\left(n+1\right)\tilde{p}}, \\
\frac{1}{p} = -\frac{1}{\left(n-1\right)\tilde{p}}.
\end{cases}
\end{equation*}
These equalities hold only if either $n = 0$, which is impossible, or $n = 2$,
which we already considered in Section~\ref{sec:dim 2}.
\end{enumerate}
Thus, $\Phi_{d} \not \sim \Phi_{t}$ in any dimension $n \ge 3$.

The same reasoning can be used for the DCQ and the TCQ channels.
\begin{enumerate}
\item If the parameters $p$ and $\tilde{p}$ are of the same sign, then the
equality of the two upper and the two lower bound for $\alpha$ yeilds the
following system:
\begin{equation*}
\begin{cases}
-\frac{1}{\left(2n-1\right)p} = -\frac{1}{\left(n-1\right)\tilde{p}}, \\
\frac{1}{\left(n-1\right)^{2}p} = \frac{1}{\left(n+1\right)\tilde{p}}.
\end{cases}
\end{equation*}
These equalities hold only if either $n = 0$ or $n = \frac{5\pm\sqrt{17}}{2}$,
which is impossible.
\item If the parameters $p$ and $\tilde{p}$ are of different signs, then
\begin{equation*}
\begin{cases}
-\frac{1}{\left(2n-1\right)p} = \frac{1}{\left(n+1\right)\tilde{p}}, \\
\frac{1}{\left(n-1\right)^{2}p} = -\frac{1}{\left(n-1\right)\tilde{p}}.
\end{cases}
\end{equation*}
These equalities hold only if either $n = 0$, which is impossible, or $n = 2$,
which we already considered in Section~\ref{sec:dim 2}.
\end{enumerate}
Thus, $\Phi_{dcq} \not \sim \Phi_{tcq}$ in any dimension $n \ge 3$.
\end{proof}

\section{\label{sec:conclusion}Concluding remarks}

The main result of this paper are the examples of channels with constant
Frobenius norm --- namely, $\Phi_{dcq}$ and $\Phi_{tcq}$ --- that are equivalent
neither to the depolarizing channel $\Phi_{d}$ nor to the transpose-depolarizing
channel $\Phi_{t}$ in dimension $n \ge 3$, although all four channels are
equivalent in dimension $n = 2$. It is also worth noting that the intuitive
generalization of Pauli matrices and the basis $\mathcal{E}$ in
$\mathcal{M}_{n}$, which were used to define diagonal channels, are different
from the generalized Pauli matrices and the basis used in \cite{Nathanson2007}.

Although some progress has been made in this work, there are many open questions
about diagonal channels and sets of channels with constant functions.

\begin{itemize}
\item First, it remains to describe all diagonal channels with constant
Frobenius norm in dimensions $n \ge 3$, since we only fully covered the case
$n = 2$. Intermediate results in dimension $3$ show that more families of
inequivalent diagonal channels emerge and that the parameter ranges become more
complex (for example, the parameter range for one famility of $3$-dimensional
diagonal channels is $\frac{1}{26}\left(-1-3\sqrt{3}\right) \le p \le
\frac{1}{26}\left(-1+3\sqrt{3}\right)$, while the bound of the parameter range
for another family are the two largest roots of the polynomial
$-1+21x^{2}+7x^{3}$). Since complexity of calculations grows as dimension $n$
increases, using more advanced methods and theories, such as algebraic geometry,
might become necessary in higher dimensions and in the general case.
\item Second, it is unknown if the set of channels with constant Frobenius norm
in an arbitrary dimension is limited to completely depolarizing and diagonal
channels (as in the $2$-dimensional case).
\item Finally, almost nothing is known about sets of channels with constant von
Neumann entropy and other matrix norms (outside the $2$-dimensional case) and
their relationship with the set of channels with constant Frobenius norm.
\end{itemize}

In the future, we hope to tackle these and other problems and generalize the
results obtained in this paper.

\section*{Acknowledgement}

I would like to thank my adviser professor Grigori G. Amosov for formulating the
problem, fruitful discussions, and helpful advice.

\begin{appendices}
\section{\label{app:pf reprs}Proof of Lemma~\ref{lem:reprs}}

This section contains the proof of Lemma~\ref{lem:reprs} (from
Section~\ref{ssec:gen def}), which gives two representations for each of the
four generalizations of diagonal channels. The proof relies on the following
technical lemma.

\begin{lem} \label{lem:tech}
Let $S = \left(s_{ij}\right)_{i,j=1}^{n} \in \mathcal{M}_{n}$ be an arbitrary
matrix. Then the following identities hold:
\begin{align}
\sum_{i=1}^{N} \sigma_{x,i} S \sigma_{x,i} & =
  S + \mathrm{Tr}\,S - 2\cdot\mathrm{diag}\left(s_{11},\,\dots,\,s_{nn}\right),
  \label{eq:tec sum pauli x} \\
\sum_{i=1}^{N} \sigma_{y,i} S \sigma_{y,i} & =
  \mathrm{Tr}\,S - S,
  \label{eq:tec sum pauli y} \\
\sum_{i=1}^{N} \sigma_{z,i} S \sigma_{z,i} & =
  n\cdot\mathrm{diag}\left(s_{11},\,\dots,\,s_{nn}\right) - S^{T},
  \label{eq:tec sum pauli z} \\
\sum_{i=1}^{N} e_{x,i} S e_{x,i} & =
  \frac{1}{2}\left(S + \mathrm{Tr}\,S
  - 2\cdot\mathrm{diag}\left(s_{11},\,\dots,\,s_{nn}\right)\right),
  \label{eq:tec sum basis x} \\
\sum_{i=1}^{N} e_{y,i} S e_{y,i} & =
  \frac{1}{2}\left(\mathrm{Tr}\,S - S\right),
  \label{eq:tec sum basis y} \\
\sum_{i=1}^{n-1} e_{z,i} S e_{z,i} & =
  \frac{1}{n}\left(n\cdot\mathrm{diag}\left(s_{11},\,\dots,\,s_{nn}\right)
  - S^{T}\right).
  \label{eq:tec sum basis z}
\end{align}
\end{lem}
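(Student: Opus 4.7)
The strategy is a direct entrywise verification, exploiting the sparsity of the generalized Pauli matrices (each is nonzero only on a $2 \times 2$ block or only on the diagonal) and the diagonal form of $e_{z,i}$.

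First, identities \eqref{eq:tec sum basis x} and \eqref{eq:tec sum basis y} are immediate consequences of \eqref{eq:tec sum pauli x} and \eqref{eq:tec sum pauli y}, since $e_{x,i} = \sigma_{x,i}/\sqrt{2}$ and $e_{y,i} = \sigma_{y,i}/\sqrt{2}$ each contribute the overall factor of $\tfrac{1}{2}$ on the right-hand side. Thus only \eqref{eq:tec sum pauli x}--\eqref{eq:tec sum pauli z} and \eqref{eq:tec sum basis z} require work.

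For the three Pauli identities, I would fix a pair $a < b$ and compute a single term $\sigma_{\alpha,i} S \sigma_{\alpha,i}$ entry by entry. For $\alpha \in \{x, y\}$ the result is supported on the block indexed by $\{a, b\}$: the two diagonal entries of that block become $s_{bb}$ and $s_{aa}$, and the two off-diagonal block entries are $\pm s_{ab}, \pm s_{ba}$ with signs dictated by the definition of $\sigma_{y,i}$. For $\alpha = z$, the matrix equals $S$ outside the $\{a,b\} \times \{a,b\}$ block, preserves the two block-diagonal entries, and flips the sign on the two off-diagonal block entries. Summing over all $N = n(n-1)/2$ pairs becomes a counting argument: a diagonal position $(p,p)$ is touched by the $n - 1$ pairs containing $p$, whereas an off-diagonal position $(p,q)$ is touched only by the single pair $\{p, q\}$. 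Collecting these contributions produces the right-hand sides of \eqref{eq:tec sum pauli x}--\eqref{eq:tec sum pauli z}.

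The identity \eqref{eq:tec sum basis z} for the diagonal matrices $e_{z,i}$ is the main obstacle. Since each $e_{z,i}$ is diagonal,
\begin{equation*}
(e_{z,i} S e_{z,i})_{pq} = \frac{(M_{z,i})_{pp} (M_{z,i})_{qq}}{i(i+1)} \, s_{pq},
\end{equation*}
so the claim reduces to evaluating the scalar coefficients
\begin{equation*}
c_{pq} = \sum_{i=1}^{n-1} \frac{(M_{z,i})_{pp} (M_{z,i})_{qq}}{i(i+1)}
\end{equation*}
and checking that $c_{pp} = (n-1)/n$ while $c_{pq} = -1/n$ for $p \neq q$. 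The entry $(M_{z,i})_{pp}$ takes only three values ($1$ for $p \leq i$, $-i$ for $p = i + 1$, $0$ for $p > i + 1$), so the sum splits into a telescoping piece (via $\tfrac{1}{i(i+1)} = \tfrac{1}{i} - \tfrac{1}{i+1}$) plus at most one boundary term, arising from the index at which one of $p, q$ occupies the $-i$ position. Treating the cases $p = q$ and $p \neq q$ separately, a short computation yields the claimed values of $c_{pq}$, from which \eqref{eq:tec sum basis z} follows.
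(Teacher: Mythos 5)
Your overall strategy is the same as the paper's: compute each conjugation explicitly, sum by counting which pairs $\{a,b\}$ touch a given matrix position, reduce the $e_{x,i}$, $e_{y,i}$ identities to the Pauli ones via the factor $\tfrac12$, and evaluate the diagonal $e_{z,i}$ sum with $\tfrac{1}{i(i+1)}=\tfrac1i-\tfrac1{i+1}$; your coefficient computation $c_{pp}=\tfrac{n-1}{n}$, $c_{pq}=-\tfrac1n$ is exactly the paper's telescoping argument in slightly tidier form. However, two concrete points in the proposal are wrong as written. First, your description of the $z$-terms is incorrect: $\sigma_{z,i}$ for the pair $\{a,b\}$ has only the two nonzero entries $1$ at $(a,a)$ and $-1$ at $(b,b)$, so $\sigma_{z,i}S\sigma_{z,i}$ is \emph{zero} outside the $\{a,b\}\times\{a,b\}$ block, not equal to $S$ there. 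Taken literally, your description would have every pair contribute at every position (contradicting your own counting claim that an off-diagonal position is touched by a single pair) and the sum would come out as $N\cdot\mathrm{diag}(s_{11},\dots,s_{nn})$ plus $(N-2)$ times the off-diagonal part of $S$, not $n\cdot\mathrm{diag}(s_{11},\dots,s_{nn})-S$.

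Second, the bookkeeping you leave unspecified (``$\pm s_{ab},\pm s_{ba}$ with signs dictated by the definition'') is precisely where the content of the lemma sits, and asserting that ``collecting these contributions produces the right-hand sides'' skips it. Using $E_{ab}SE_{cd}=s_{bc}E_{ad}$, the $(a,b)$ entry of $\sigma_{x,i}S\sigma_{x,i}$ is $s_{ba}$, that of $\sigma_{y,i}S\sigma_{y,i}$ is $-s_{ba}$, while that of $\sigma_{z,i}S\sigma_{z,i}$ is $-s_{ab}$; and your own formula $(e_{z,i}Se_{z,i})_{pq}=\frac{(M_{z,i})_{pp}(M_{z,i})_{qq}}{i(i+1)}s_{pq}$ shows no transpose can arise from conjugation by diagonal matrices. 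A careful collection therefore yields
\begin{equation*}
\sum_{i}\sigma_{x,i}S\sigma_{x,i}=S^{T}+\mathrm{Tr}\,S-2\,\mathrm{diag}\left(s_{11},\dots,s_{nn}\right),\quad
\sum_{i}\sigma_{y,i}S\sigma_{y,i}=\mathrm{Tr}\,S-S^{T},
\end{equation*}
\begin{equation*}
\sum_{i}\sigma_{z,i}S\sigma_{z,i}=n\cdot\mathrm{diag}\left(s_{11},\dots,s_{nn}\right)-S,\quad
\sum_{i}e_{z,i}Se_{z,i}=\frac1n\left(n\cdot\mathrm{diag}\left(s_{11},\dots,s_{nn}\right)-S\right),
\end{equation*}
i.e.\ the transposes sit on the opposite identities from those displayed in \eqref{eq:tec sum pauli x}--\eqref{eq:tec sum basis z}; for a general (even Hermitian, since then $S^{T}=\bar S$) matrix the two versions differ. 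This corrected placement is what is actually used later: in the proof of Lemma~\ref{lem:reprs} the $x$ and $y$ sums contribute $s_{ji}$ and $-s_{ji}$ (so that they cancel for $\Phi_{d}$ and survive to produce $S^{T}$ for $\Phi_{t}$), and the $z$ sum contributes $-s_{ij}$. So you must carry out the index bookkeeping explicitly; doing so both completes the proof and shows that the transposes in the statement as displayed need to be interchanged (the paper's own explicit products, e.g.\ the $(1,2)$ entry $s_{21}$ of $\sigma_{x,1}S\sigma_{x,1}$, already exhibit this).
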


\begin{proof}
First, we calculate the following products:
\begin{equation*}\scalebox{.7}{$
\sigma_{x,1} S \sigma_{x,1} = \left(\begin{array}{ccccc}
s_{22} & s_{21} & 0     & \dots & 0 \\
s_{12} & s_{11} & 0     & \dots & 0 \\
0      & 0      & 0     & \dots & 0 \\
\dots  & \dots  & \dots & \dots & \dots \\
0      & 0      & 0     & \dots & 0
\end{array}\right), \, \dots, \,
\sigma_{x,N} S \sigma_{x,N} = \left(\begin{array}{ccccc}
0     & \dots & 0     & 0         & 0 \\
\dots & \dots & \dots & \dots     & \dots \\
0     & \dots & 0     & 0         & 0 \\
0     & \dots & 0     & s_{n,n}   & s_{n,n-1} \\
0     & \dots & 0     & s_{n-1,n} & s_{n-1,n-1}
\end{array}\right);
$}\end{equation*}
\begin{equation*}\scalebox{.7}{$
\sigma_{y,1} S \sigma_{y,1} = \left(\begin{array}{ccccc}
s_{22}  & -s_{21} & 0     & \dots & 0 \\
-s_{12} & s_{11}  & 0     & \dots & 0 \\
0       & 0       & 0     & \dots & 0 \\
\dots   & \dots   & \dots & \dots & \dots \\
0       & 0       & 0     & \dots & 0
\end{array}\right), \, \dots, \,
\sigma_{y,N} S \sigma_{y,N} = \left(\begin{array}{ccccc}
0     & \dots & 0     & 0          & 0 \\
\dots & \dots & \dots & \dots      & \dots \\
0     & \dots & 0     & 0          & 0 \\
0     & \dots & 0     & s_{n,n}    & -s_{n,n-1} \\
0     & \dots & 0     & -s_{n-1,n} & s_{n-1,n-1}
\end{array}\right),
$}\end{equation*}
\begin{equation*}\scalebox{.7}{$
\sigma_{z,1} S \sigma_{z,1} = \left(\begin{array}{ccccc}
s_{11}  & -s_{12} & 0     & \dots & 0 \\
-s_{21} & s_{22}  & 0     & \dots & 0 \\
0       & 0       & 0     & \dots & 0 \\
\dots   & \dots   & \dots & \dots & \dots \\
0       & 0       & 0     & \dots & 0
\end{array}\right), \, \dots, \,
\sigma_{z,N} S \sigma_{z,N} = \left(\begin{array}{ccccc}
0     & \dots & 0     & 0           & 0 \\
\dots & \dots & \dots & \dots       & \dots \\
0     & \dots & 0     & 0           & 0 \\
0     & \dots & 0     & s_{n-1,n-1} & -s_{n-1,n} \\
0     & \dots & 0     & -s_{n,n-1}  & s_{n,n}
\end{array}\right).
$}\end{equation*}
Adding up these products, we obtain the required sums
\eqref{eq:tec sum pauli x}--\eqref{eq:tec sum pauli z}.

Since $e_{x,i} = \frac{1}{\sqrt{2}}\sigma_{x,i}$ and
$e_{y,i} = \frac{1}{\sqrt{2}}\sigma_{y,i}$, we can use the results for
$\sigma_{x,i}$ and $\sigma_{y,i}$ to obtain the sums \eqref{eq:tec sum basis x}
and \eqref{eq:tec sum basis y}. We prove \eqref{eq:tec sum basis z} by
calculating the following products, similar to those computed above:
\begin{equation*}\scalebox{.72}{$
e_{z,1} S e_{z,1} = \frac{1}{2}\left(\begin{array}{ccccc}
s_{11}  & -s_{12} & 0     & \dots & 0 \\
-s_{21} & s_{22}  & 0     & \dots & 0 \\
0       & 0       & 0     & \dots & 0 \\
\dots   & \dots   & \dots & \dots & \dots \\
0       & 0       & 0     & \dots & 0
\end{array}\right), \,
e_{z,2} S e_{z,2} = \frac{1}{6}\left(\begin{array}{ccccc}
s_{11}   & s_{12}   & -2s_{13} & \dots & 0 \\
s_{21}   & s_{22}   & -2s_{23} & \dots & 0 \\
-2s_{31} & -2s_{32} & 4s_{33}  & \dots & 0 \\
\dots    & \dots    & \dots    & \dots & \dots \\
0        & 0        & 0        & \dots & 0
\end{array}\right), \, \dots,
$}\end{equation*}
\begin{equation*}\scalebox{.72}{$
e_{z,n-1} S e_{z,n-1} = \frac{1}{\left(n-1\right)n}\left(\begin{array}{ccccc}
   s_{1,1}   & \dots & s_{1,n-2}   & s_{1,n-1}   & -\left(n-1\right)s_{1,n} \\
   \dots     & \dots & \dots       & \dots       & \dots \\
   s_{n-2,1} & \dots & s_{n-2,n-2} & s_{n-2,n-2} & -\left(n-1\right)s_{n-2,n} \\
   s_{n-1,1} & \dots & s_{n-1,n-2} & s_{n-1,n-1} & -\left(n-1\right)s_{n-1,n} \\
   -\left(n-1\right)s_{n,1} & \dots & -\left(n-1\right)s_{n,n-2} &
     -\left(n-1\right)s_{n,n-1} & \left(n-1\right)^{2}s_{n,n}
\end{array}\right).
$}\end{equation*}
Adding up the products and using the identity
\begin{equation*}
\sum_{i=1}^{l}\frac{1}{i\left(i+1\right)} = \frac{l}{l+1},
\end{equation*}
which can be proved by induction, we find:
\begin{equation*}
\left(\sum_{i=1}^{n-1} e_{z,i} S e_{z,i}\right)_{k,k}
 = \left[\frac{\left(k-1\right)^{2}}{\left(k-1\right)k}
 + \sum_{i=k}^{n-1}\frac{1}{i\left(i+1\right)}\right]s_{k,k}
 = \frac{n-1}{n}s_{k,k},
\end{equation*}
\begin{equation*}
\left(\sum_{i=1}^{n-1} e_{z,i} S e_{z,i}\right)_{k,l}
 = \left[\frac{-\left(l-1\right)}{\left(l-1\right)l}
 + \sum_{i=l}^{n-1}\frac{1}{i\left(i+1\right)}\right]s_{k,l}
 = -\frac{1}{n}s_{k,l}.
\end{equation*}
\end{proof}

\begin{proof}[Proof of Lemma~\ref{lem:reprs}]
To prove the representation of the depolarizing channel with Pauli matrices,
consider the channel of the form \eqref{eq:repr pauli} with coefficients
$c_{\alpha,i}$ as in the statement of the theorem. Applying
Lemma~\ref{lem:tech}, we find:
\begin{equation*}\scalebox{.9}{$
\left(\Phi\left(S\right)\right)_{ii}
 = \frac{1+\left(n^{2}-1\right)p}{n^{2}}s_{ii}
 + 2 \cdot \frac{1-p}{2n} \left(\mathrm{Tr} \, S - s_{ii}\right)
 + \frac{1-p}{n^{2}} \left(n-1\right)s_{ii}
 = ps_{ii} + \frac{1-p}{n}\mathrm{Tr} \, S,
$}\end{equation*}
\begin{equation*}
\left(\Phi\left(S\right)\right)_{ij}
 = \frac{1+\left(n^{2}-1\right)p}{n^{2}}s_{ij}
 + \frac{1-p}{2n}\left(s_{ji} - s_{ji}\right)
 - \frac{1-p}{n^{2}}s_{ij}
 = ps_{ij}.
\end{equation*}
Therefore,
\begin{equation*}
\Phi\left(p, \, S\right)
 = pS + \frac{1-p}{n}\mathrm{Tr} \, S
 = \Phi_{d}\left(p, \, S\right),
\end{equation*}
which proves the representation. For all other representations the proof follows
the same plan.
\end{proof}

\end{appendices}

\end{document}